\lstdefinelanguage{z3}{
  keywords={declare, const, assert, Int, Real, and, or, forall, minimize, maxize, check, sat, get, model},
}
\lstdefinelanguage{WHILE}{
  keywords={Lap, while, if, then, else, cons, havoc},
}
\let\origthelstnumber\thelstnumber
\newcommand*\Suppressnumber{%
  \lst@AddToHook{OnNewLine}{%
    \let\thelstnumber\relax%
     \advance\c@lstnumber-\@ne\relax%
    }%
}
\newcommand*\Reactivatenumber{%
  \lst@AddToHook{OnNewLine}{%
   \let\thelstnumber\origthelstnumber%
   \advance\c@lstnumber\@ne\relax}%
}
\newcommand\laplace{\ensuremath{\cod{Lap}}\xspace}
\newcommand\uniform{\ensuremath{\cod{Uniform}}_{[0,1]}}
\newcommand\lapm[1]{\laplace~{#1}}
\newcommand\op{\cod{op}}
\newcommand\tylist{\cod{list}}
\newcommand\kout{\cod{return}}
\newcommand\refine{\cod{refine}}
\newcommand\infers{\Join}
\newcommand\leftv[1]{{#1}^{(1)}}
\newcommand\rightv[1]{{#1}^{(2)}}
\newcommand\restrict[3]{{#1}\upharpoonright^{#3}_{#2}}
\newcommand\outcmd[1]{\kout~{#1}}
\newcommand\havoc[1]{\cod{havoc}~{#1}}
\newcommand\instrument[1]{\underline{\ensuremath{#1}}}
\newcommand\annotation[1]{\colorbox{lightgray}{\ensuremath{#1}}}
\newcommand\tT{{\tilde T}}
\newcommand\post{{Φ}}
\newcommand\pre{{\Psi}}
\newcommand\predicate{\mathcal{P}}
\newcommand\inv{{Ι}}
\newcommand\store{{m}}
\newcommand\Store{\mathcal{M}}
\newcommand\distance[1]{\widehat{#1}}
\newcommand\gammadis{\Delta Γ_{\priv}}
\newcommand\joinmem[1]{\oplus_{#1}}
\newcommand\vpriv{{\mathbf{v}_{\priv}}}
\newcommand\vdelta{{\mathbf{v}_{δ}}}
\newcommand\evalexpr[2]{\trans{#1}_{#2}}
\newcommand\real{{r}}
\newcommand\dexpr{{\mathbb{d}}}
\newcommand\htriple[3]{\{{#1}\}{#2}\{{#3}\}}
\newcommand\transform{\rightharpoonup}
\newcommand\samples{π}
\newcommand\priv{\ensuremath{\epsilon}}
\newcommand\dist{\mathbf{Dist}}
\newcommand\extmemfull[3]{{#1}\uplus({#2})} 
\newcommand\extmem{{{\store}\uplus(\priv)}} 
\newcommand\support{\cod{support}}
\newcommand\unitop{\cod{unit}}
\newcommand\bindop{\cod{bind}}
\newcommand\langzero{LightDP$_0$\xspace}
\newcommand\lang{LightDP\xspace}
\newcommand\basety{\mathcal{B}}
\newcommand\dgdist{\ensuremath{\mathds{1}}\xspace}
\newcommand\funsigfour[4]{
\makebox[1.1cm]{\textbf{function}} \textsc{#1}~(#2) \par
\makebox[1.1cm]{}                  \textbf{returns}~{#3}\par
\makebox[1.1cm]{}                  \textbf{precondition}~{#4}\par}
\newif\ifblinded\blindedfalse
\newif\ifreport\reporttrue
\title{\lang: Towards Automating Differential Privacy Proofs}
\begin{document}

\CopyrightYear{2017} 
\publicationrights{transferred}
\conferenceinfo{POPL '17,}{January 18-20, 2017, Paris, France}
\copyrightdata{978-1-4503-4660-3/17/01}
\copyrightdoi{3009837.3009884}

\maketitle

\begin{abstract}
The growing popularity and adoption of differential privacy in academic and
industrial settings has resulted in the development of increasingly
sophisticated algorithms for releasing information while preserving privacy.
Accompanying this phenomenon is the natural rise in the development and
publication of incorrect algorithms, thus demonstrating the necessity of formal
verification tools. However, existing formal methods for differential privacy
face a dilemma: methods based on customized logics can verify sophisticated
algorithms but come with a steep learning curve and significant annotation
burden on the programmers, while existing programming platforms lack expressive
power for some sophisticated algorithms.

In this paper, we present \lang, a simple imperative language that strikes a
better balance between expressive power and usability. The core of \lang is a
novel relational type system that separates relational reasoning from privacy
budget calculations. With dependent types, the type system is powerful enough
to verify sophisticated algorithms where the composition theorem falls short.
In addition, the inference engine of \lang infers most of the proof details,
and even searches for the proof with minimal privacy cost bound when multiple
proofs exist. We show that \lang verifies sophisticated algorithms with little
manual effort.

\end{abstract}

\category{D.3.1}{Programming Languages}
{Formal Definitions and Theory}
\category{D.2.4}{Software Engineering}
{Software/Program Verification}
\category{F.3.1}{Logics and Meanings of Programs}
{Specifying and Verifying and Reasoning about Programs.}

\keywords
Differential privacy; dependent types; type inference; 

\section{Introduction}\label{sec:intro}
Companies, government agencies, and academics are interested in analyzing and modeling datasets containing sensitive information about individuals (e.g., medical records, customer behavior, etc.). Privacy concerns can often be mitigated if the algorithms used to manipulate the data, answer queries, and build statistical models satisfy differential privacy \cite{dwork06Calibrating} --- a set of restrictions on their probabilistic behavior that provably limit the ability of attackers to infer individual-level sensitive information \cite{dwork06Calibrating,pufferfishjournal}.

Since 2006, differential privacy has seen explosive growth in many areas, including theoretical computer science, databases, machine learning, and statistics. This technology has been deployed in practice, starting with the U.S. Census Bureau LEHD OnTheMap tool \cite{ashwin08:map}, the Google Chrome Browser \cite{rappor}, and Apple's new data collection efforts \cite{appledp}. 
However, the increase in popularity and usage of differential privacy has also been accompanied by a corresponding increase in the development and implementation of algorithms with flawed proofs of privacy; for example, \citet{ashwinsparse} and \citet{ninghuisparse} catalog some recent cases about variations of the Sparse Vector method~\cite{diffpbook} .

Currently, there are two strategies for combating this trend. The first is the use of programming platforms \cite{pinq,gupt,Roy10Airavat} that have privacy primitives that restrict the privacy-preserving algorithms that can be implemented and often add more noise than is necessary to the computation. The second strategy is the development of languages and formal verification tools for differential privacy \cite{Fuzz,DFuzz,Barthe12,Barthe14,Barthe16,BartheCCS16}. These languages enable the development of much more sophisticated algorithms that use less noise and hence provide more accurate outputs. However, the increased power of the formal methods comes with a considerable cost --- a programmer has to heavily annotate code and generate proofs using complicated logics such as a customized relational Hoare logic proposed by~\citet{Barthe12}. Moreover, intricate proof details have to be provided by a programmer, which makes exploring variations of an algorithm difficult since small variations in code can cause significant changes to a proof.

In this paper, we present \lang, a language for developing provably privacy-preserving algorithms. The goal of \lang is to minimize the burden on the programmer while retaining most of the capabilities of the state-of-the-art, such as verifying the Sparse Vector method \cite{diffpbook} (an algorithm which, until very recently~\cite{Barthe16,BartheCCS16}, was beyond the capabilities of verification tools). For example, we show that the Sparse Vector method can be verified in \lang with little manual effort: just two lines of annotation from the programmer.

\lang is equipped with a novel light-weight relational type system that clearly
separates relational reasoning from privacy budget calculation. In particular,
it transforms the original probabilistic program into an equivalent
nonprobabilistic program, where all privacy costs become explicit. With
dependent types, the explicitly calculated privacy cost in the target language
may depend on program states, hence enabling the verification of sophisticated
algorithms (e.g., the Sparse Vector method) that are beyond the capability of
many existing methods~\cite{Fuzz,DFuzz,Barthe12,Barthe14}  based on the
composition theorem~\cite{pinq}.  Moreover,  the transformed nonprobabilistic
program is ready for off-the-shelf formal verification methods, such as Hoare
logic, to provide an upper bound of the privacy cost.  

On the usability end, \lang has an inference engine that reduces the already
low annotation burden on the programmers. Although the inference engine
does not yet automate the privacy budget calculation part of a proof, it does
 fill in missing details in the relational reasoning part of a
proof; furthermore, based on MaxSMT theory, it even searches for the optimal proof that
minimizes privacy cost with minimal human involvement. For example, with only one
postcondition annotation and one loop invariant annotation from a programmer,
\lang confirms that the proof in~\cite{diffpbook} indeed provides the minimal
privacy cost.

To summarize, this paper makes the following contributions:
\begin{enumerate}
\item \lang, a new imperative language for verifying sophisticated
privacy-preserving algorithms (Section~\ref{sec:syntax}), 

\item expressive static annotations incorporating dependent types, enabling
precise tracking of privacy costs (Section~\ref{sec:typing}),

\item a formal proof that the \lang type system soundly tracks differential
privacy costs, and new proof techniques involved
in the soundness proof (Section~\ref{sec:soundness}),

\item an inference engine that automatically fills in missing details involved
in the relational reasoning part of a proof, and further, minimizes provable
privacy cost bound when multiple proofs exist, with little manual effort
(Section~\ref{sec:fulllang}),

\item case studies on complex algorithms showing that formal verification of
privacy-preserving algorithms are viable with little programmer annotation
burden (Section~\ref{sec:casestudy}). 
\end{enumerate}

\section{Preliminaries and Illustrating Example}
\label{sec:preliminaries}
\label{sec:illustrative}

\subsection{Distributions}
\label{sec:dist}
We define the set of \emph{sub-distributions} over a discrete set $A$, written
$\dist(A)$, as the set of functions $μ: A → [0,1]$, such that $\sum_{a\in A}
μ~a≤1$. When applied to an event $E⊆A$, we define $μ(E)\defn \sum_{e\in E}
μ(e)$. Notice that we do not require $\sum_{a\in A} μ~a=1$, a special case when
$μ$ is a \emph{distribution}, since sub-distribution gives rise to an elegant
semantics for programs that may not terminate~\cite{Kozen81}.

Given a distribution $μ\in \dist(A)$, its support is defined as
$\support(μ)\defn \{a\mid μ(a)>0\}$.  We use $\dgdist_a$ to represent the
degenerate distribution $μ$ that $μ(a)=1$ and $μ(a')=0$ if $a'\not=a$.
Moreover, sub-distributions can be given a structure of a monad. Formally, we
define the $\unitop$ and $\bindop$ functions as follows:
\begin{align*}
\unitop &: A → \dist(A) \defn λa.~\dgdist_a \\ 
\bindop &: \dist(A)→(A → \dist(B))→\dist(B)  \\
        &\defn λμ.~λf.~(λb.~\sum_{a\in A} (f~a~b)\times μ(a))
\end{align*}
That is, $\unitop$ takes an element in $A$ and returns the Dirac distribution
where all mass is assigned to $a$; $\bindop$ takes $μ$, a distribution on
$A$, and $f$, a mapping from $A$ to distributions on $B$ (e.g., a conditional
distribution of $B$ given $A$), and returns the
corresponding marginal distribution on $B$. This monadic view will avoid
cluttered definitions and proofs when probabilistic programs are involved
(Section~\ref{sec:semantics}). 

\if 0
In this paper, we approach the formal verification of differential privacy
using an equivalent point-wise differential privacy definition, which is
frequently used in pen-and-paper proofs. It is straightforward to check the
equivalence of these two definitions.
\begin{definition}[Point-wise privacy]
Let $\priv,\delta>0$, a probabilistic computation $M: A → \dist(B)$ is
point-wise $(\priv,\delta)$-private with respect to an adjacency relation
$\pre⊆A\times A$ if for every pair of inputs $a_1,a_2\in A$ satisfying
$a_1 \pre a_2$, and every output $b\in B$,  we have
\[P(M a_1=b)≤\exp(\priv)P(M a_2=b)+\delta_b\]
where $\delta_b$ are constants such that $\sum_{b\in B} \delta_b≤\delta$.
\end{definition}
\fi 

\subsection{Differential Privacy}

Differential privacy has two major variants: \emph{pure} \cite{dwork06Calibrating} (obtained by setting $\delta=0$ in the following definition) and \emph{approximate} \cite{dworkKMM06:ourdata} (obtained by choosing a $\delta>0$).

\begin{definition}[Differential privacy]
\label{def:diffpriv}
Let $\priv,\delta≥0$. A probabilistic computation $M: A \rightarrow \dist(B)$ is
$(\priv,\delta)$-differentially private with respect to an adjacency relation
$\pre\subseteq A\times A$ if for every pair of inputs $a_1,a_2\in A$ such that $a_1 \pre
a_2$, and every output subset $E\subseteq B$, we have
\[P(M a_1\in E)\leq\exp(\priv)P(M a_2\in E)+\delta\]
\end{definition}

Intuitively, a probabilistic computation satisfies differential privacy if it
produces similar distributions for any pair of inputs related by $\pre$.  In
the most common applications of differential privacy, $A$ is the set of
possible databases and the adjacency relation $\pre$ is chosen so that $a_1\pre
a_2$ whenever $a_1$ can be obtained from $a_2$ by adding or removing data
belonging to a single individual.

In this paper, we focus on the verification of algorithms that satisfy pure
differential privacy. An algorithm is $\priv$-differentially private iff it is
$(\priv,0)$-differentially private according to Definition~\ref{def:diffpriv}.

\subsection{The Sparse Vector Method}

\newcommand{\algrule}[1][.5pt]{\par\vskip.2\baselineskip\hrule height #1\par\vskip.2\baselineskip}

\begin{figure}
\small

\algrule
\funsigfour{SparseVector}{$T,N,\priv$\annotation{:\tyreal_0}; $q$\annotation{:\tylist~\tyreal_*}}{(out\annotation{:\tylist~\bool_0})}{$∀i.~‐1≤(\distance{q}[i])≤1$}
\algrule
$\annotation{\cod{c1,c2,i}:\tyreal_0; \tT, η_1:\tyreal_1;η_2:\tyreal_{q[i]+η_2≥\tT?2:0}}$
\begin{lstlisting}[frame=none]
$η_1$ := $\lapm{(2/\priv)}$;
$\tT$ := $T + η_1;$
c1 := 0; c2 := 0; i := 0;
while (c1 < $N$)
  $η_2$ := $\lapm{(4N/\priv)}$;
  if ($q[i]+η_2≥\tT$) then
    out:= true::out;
    c1 := c1 + 1;
  else
    out:= false::out;
    c2 := c2 + 1;
  i := i+1;
\end{lstlisting}
\caption{The Sparse Vector method. Type annotations are shown in grey. The
precondition specifies the adjacency assumption.}
\label{fig:sparsevector}
\end{figure}

The goal of formal methods for verifying $\epsilon$-differential privacy is to provide an upper bound on the privacy cost $\epsilon$ of a program. Typically, users will have a fixed privacy budget $\epsilon^\prime$ and can only run programs whose provable privacy cost $\epsilon$ does not exceed the budget: $\epsilon\leq \epsilon^\prime$. For this reason, it is important that formal methods are able to prove a tight upper bound on the privacy cost.

With the exception of \cite{Barthe16,BartheCCS16}, most existing formal methods rely on the composition theorem \cite{pinq}. That is, in the case of $\epsilon$-differential privacy, those methods essentially treat a program as a series of modules, each with a provable upper bound $\epsilon_i$ on its privacy cost. Then by the composition theorem~\cite{pinq}, the total privacy cost is bounded by $\sum_i \epsilon_i$. However, for sophisticated advanced algorithms, the composition theorem often falls short --- it can provide upper bounds that are arbitrarily larger than the true privacy cost. Providing the tightest privacy cost for intricate algorithms requires formal methods that are more powerful but avoid over-burdening the programmers with annotation requirements.
To illustrate these challenges, we consider
the Sparse Vector method \cite{diffpbook}. It is a prime example of the need for formal methods because many of its published variants have been shown to be incorrect \cite{ashwinsparse,ninghuisparse}.

The Sparse Vector method also has many correct variants, one of which is
shown in Figure~\ref{fig:sparsevector}. For now, safely ignore the type
annotations in grey and the precondition. Here, the input list $q$ represents a sequence of results of counting
queries $q_1, q_2, q_3, \dots$ (e.g., how many patients in the data have
cancer, how many patients contracted an infection in the hospital, etc.)
running on a database. The goal is to answer as accurately as possible the
following question: which queries, when evaluated on the true database, return an
answer greater than the threshold $T$ (a program input unrelated to the sensitive data)? 

To achieve differential privacy, the algorithm adds appropriate Laplace noise
to the threshold and to each query. Here, $\lapm(4N/\priv)$ draws one sample
from the Laplace distribution with mean zero and a scale factor $(4N/\priv)$.
If the noisy query answer ($q[i]+\eta_2$) is above the noisy threshold
$\tT$, it adds $\true$ to the output list $\cod{out}$ (in the slot reserved for that query) and
otherwise, it adds $\false$. The key to this algorithm is the deep
observation that once noise has been added to the threshold, queries for which
we output $\true$ have a privacy cost (so we can answer at most $N$ of them,
where $N$ is a parameter); however, outputting $\false$ for a query does not
introduce any new privacy costs \cite{diffpbook}. The algorithm ensures that
the total privacy cost is bounded by the input $\priv$, the parameter used in
Figure \ref{fig:sparsevector}. This remarkable property makes the Sparse Vector
method ideal in situations where the vast majority of query counts are expected
to be below the threshold.

\paragraph{Failure of the composition theorem}
If we just use the composition theorem, we would have a privacy cost of
$\epsilon/4N$ for each loop iteration (i.e., every time $\lapm(4N/\epsilon)$
noise is added to a query answer) due to the property of the Laplace
distribution. Since the number of iterations are not a priori bounded, the
composition theorem could not prove that the algorithm
satisfies $\epsilon^\prime$-differential privacy for any finite
$\epsilon^\prime$; more advanced methods are needed.

\paragraph{Informal proof and sample runthrough}
Proofs of correctness (of this and other variants) can be found in
\cite{diffpbook,ashwinsparse,ninghuisparse}. Here we provide an informal correctness argument by example to illustrate the subtleties involved both in proving it and inferring a tight bound for the algorithm.

Suppose we set the parameters $T=4$ (we want to know which queries have a value at least $4$) and $N=1$ (we stop the algorithm after the first time it outputs $\true$). Consider the following two databases $D_1, D_2$ that differ on one record, and their corresponding query answers:
\begin{align*}
D_1:\quad q[0]=2,\quad q[1]=3,\quad q[2]=5\\
D_2:\quad q[0]=3,\quad q[1]=3,\quad q[2]=4
\end{align*}
Suppose in one execution on $D_1$, the noise added to $T$ is $\alpha^{(1)}=1$ and the noise added to $q[0],q[1],q[2]$ is $\beta^{(1)}_0=2,\beta^{(1)}_1=0,\beta^{(1)}_2=0$, respectively. Thus the noisy threshold is $\tT=5$ and the noisy query answers are $q[0]+\beta^{(1)}_0=4$, $q[1]+\beta^{(1)}_1=3$, $q[2]+\beta^{(1)}_2=5$ and so the algorithm outputs the sequence: $(\false, \false, \true)$.

According to Definition~\ref{def:diffpriv}, for any output sequence $\omega$, we need to show $P(M(D_1)=\omega)\leq e^\epsilon P(M(D_2)=\omega)$ for all possible outputs $\omega$ and databases $D_1,D_2$ that differ on one record. For the databases $D_1,D_2$ described above, we will show that $P(M(D_1)=(\false,\false,\true))\leq e^\epsilon P(M(D_2)=(\false,\false,\true))$. We proceed in two steps.

\paragraph{Aligning randomness}
We first create an \emph{injective} (but not necessarily bijective) function from the randomness in the execution under $D_1$ into the randomness in the execution under $D_2$, so that both executions generate the same output. For an execution under $D_2$, let $\alpha^{(2)}$ be the noise added to the threshold and let $\beta^{(2)}_0,\beta^{(2)}_1,\beta^{(2)}_2$ be the noise added to the queries $q[0],q[1],q[2]$, respectively. Consider an injective function candidate  that adds 1 to the threshold noise (i.e., $\alpha^{(2)} = \alpha^{(1)}+1$), keeps the noise of queries for which $D_1$ reported $\false$ (i.e., $\beta^{(2)}_0 =\beta^{(1)}_0$ and $\beta^{(2)}_1 = \beta^{(1)}_1$) and adds 2 to the noise of queries for which $D_1$ reported $\true$ (i.e., $\beta^{(2)}_2 = \beta^{(1)}_2 +2$).

In our running example, execution under $D_2$ with this function would result in the noisy threshold $\tT=6$ and noise query answers  $q[0]+\rightv{β}_0=5$, $q[1]+\rightv{β}_1=3$, $q[2]+\rightv{β}_2=6$. Hence, the output once again is $(\false,\false,\true)$. In fact, it is easy to see that under this injective function, every execution under $D_1$ would result in an execution under $D_2$ that produces the same answer.

\paragraph{Counting privacy cost}
For each output $ω$, let $f$ be the injective function; let $A$ be the set of random variable assignments that cause execution under $D_1$ to produce $\omega$; let $B$ be the possible assignments we can get by applying $f$ to $A$; and let $C$ be the set of random variable assignments that are not in the range of $f$, but nevertheless cause execution under $D_2$ to produce the output $\omega$ as well\footnote{This is possible since we do not assume the function to be a bijection.}. Then we can rewrite $P(M(D_1)=\omega)=P(A)$ and $P(M(D_2)=\omega)=P(B)+P(C)$.

Once we have done the alignment of randomness as above, and recalling that $N=1$ in our example, the proof finishes by showing:
\begin{align*}
P(A) &= \sum_{a\in A} P(a) \leq e^{\frac{\epsilon}{2}}e^{2 \frac{\epsilon}{4N}} \sum_{a\in A} P(f(a))\\
     &= e^\epsilon P(B) \leq e^{\epsilon}(P(B)+P(C))
\end{align*}
where the $e^{\frac{\epsilon}{2}}$ factor results from using a threshold value that is $1$ larger, while the $e^{2 \frac{\epsilon}{4N}}$ factor results from adding 2 to the noise for query $q[2]$.
 Notice that no privacy cost is paid for queries $q[0]$ and $q[1]$, since the same noise is added under $D_1$ and $D_2$. Moreover, due to the injective assumption, $\sum_{a\in A} P(f(a))=P(B)$.

\paragraph{Challenges} 
The Sparse Vector method is a prime example of the need for formal methods,
since paper-and-pencil proof is shown to be error-prone for its variants. The
intricacy in its proof brings major challenges for formal methods:
\begin{enumerate}
\item Precision: a crucial observation in the proof is that once noise has been
added to the threshold, different privacy costs are only paid for outputting
$\true$. Hence, the cost calculation needs to consider program states.

\item Aligning randomness: finding an injective function from the randomness
under $D_1$ to that under $D_2$ such that outputting $ω$ under $D_1$ entails
outputting $ω$ under $D_2$ is the most intriguing piece in the proof. However,
coming up with a correct function, as we described informally above, is
non-trivial.

\item Finding the tightest bound: in fact, an infinite number of proofs exist
for the Sparse Vector method, though with various provable privacy
costs\footnote{For example, another injective function adds 2 to the threshold
noise (i.e., $\alpha^{(2)} = \alpha^{(1)}+2$), keeps the noise of queries for
which $D_1$ reported $\false$ (i.e., $\beta^{(2)}_0 =\beta^{(1)}_0$ and
$\beta^{(2)}_1 = \beta^{(1)}_1$) and adds 3 to the noise of queries for which
$D_1$ reported $\true$ (i.e., $\beta^{(2)}_2 = \beta^{(1)}_2 +3$). It is easy
to check that this mapping has the desired property, but the privacy cost is
$(7\priv)/4$.}. Since a tighter privacy cost bound allows a privacy-preserving
algorithm to produce more accurate outputs, a formal method should produce the
tightest one when possible.
\end{enumerate}

Except the very recent work by~\citet{Barthe16,BartheCCS16}, existing formal
methods (e.g.,~\cite{Fuzz,DFuzz,Barthe12,Barthe14}) rely on the composition
theorem, hence fail to prove that the Sparse Vector method satisfies
$\priv'$-privacy for any $\priv'$. Recent work by~\citet{Barthe16,BartheCCS16}
verifies variants of the Sparse Vector method, but its customized relational
logic incurs heavy annotation burden, including the randomness alignment.
Moreover, those work cannot search for the tightest cost bound.

\subsection{Our Approach}
To tackle the challenges above, we propose \lang, an imperative language that
enables verification and even inference of the tightest privacy cost for
sophisticated privacy-preserving algorithms. We illustrate the key components
of \lang in this section, and detail all components in the rest of this paper.

\paragraph{Relational reasoning}

The core of \lang is a novel light-weight dependent type system that explicitly
captures the \emph{exact difference} of a variable's values in two executions
under two adjacent databases. Let $v^{(1)}$ ($v^{(2)}$) be the value of a
variable $x$ in an execution under $D_1$ ($D_2$). The type $τ$ for $x$ in \lang
has the form of $\basety_{\dexpr}$, meaning that $x$ holds a value of basic
type $\basety$ (e.g., $\inte$, $\cod{real}$, $\bool$), and
$v^{(1)}+\dexpr=v^{(2)}$. Required type annotations for the Sparse Vector
method are shown in grey in Figure~\ref{fig:sparsevector}. Note that for
brevity, we write $\tyreal$ for numeric base types (e.g., $\inte$,
$\cod{real}$).  Hereafter, we refer to the $\dexpr$ counterpart as
the \emph{distance}.

In the simplest case, the distance is a constant. For example, the input
threshold $T$ has a type $\tyreal_0$, meaning that its value remains the same
in two executions (since $T$ is a parameter unrelated to private information
about individuals). The distance of variable $η_1$ captures one randomness
alignment in the informal proof above: we enforce a distance of $1$ for $η_1$
(i.e., we map noise $v$ to $v+1$ for any value $v$ sampled in the execution
under $D_1$). 

The distance may also depend on program states. Hence, \lang supports dependent
types. For instance, consider the distance of $η_2$: $q[i]+η_2≥\tT?2:0$. This
annotation specifies an injective function that maps noise $v$ to $v+2$ when the
value of $q[i]+η_2≥\tT$ is $\true$ (i.e., the output is $\true$) in an
execution under $D_1$, and maps noise $v$ to $v$ otherwise. As we will see
shortly, dependent types allow a precise privacy cost to be calculated
under various program states, hence enabling bounding privacy cost in a
tighter way than mechanisms based on the composition theorem.

Moreover, \lang uses a distinguished distance $*$ as a shorthand for the
standard Sigma type (e.g., $\tyreal_* \defn \Sigma_{x:\tyreal_0}~\tyreal_x$).
In other words, each value of type $\tyreal_*$ (e.g., each query in the list
$q$) can be interpreted as a pair of the form $(x:\tyreal_0,y:\tyreal_x)$,
where the first component specifies the distance of the second component. Note
that by language design, the first component is invisible in the
privacy-preserving algorithm; however, the type system may reason about and
manipulate it via a distinguished operation~$\distance{ }$ .
For instance, the precondition in the running example states the adjacent
assumption on databases: for each query answer $q[i]$ in $q$, its distance
($\distance{q}[i]$) is bounded by $±1$. The star type is also useful for
distances that cannot be easily captured at compile time
(Section~\ref{sec:syntax}).

With type annotations, a type system statically verifies that the distances are
maintained as an invariant throughout the execution. For example, the output
$\cod{out}$ in Figure~\ref{fig:sparsevector} has type $\tylist~\bool_0$,
meaning that each element in the list has type $\bool_0$. Hence, the
invariant maintained by the type system ensures that  two related
executions always generate the same output. 

\paragraph{Calculating privacy cost} When type-checking succeeds, the type
system transforms the original program to a non-probabilistic, non-relational
program where privacy cost is explicitly calculated. The transformed program
for the Sparse Vector method is shown in Figure~\ref{fig:translated}. 

The transformed program is almost identical to the original one, except that:
\begin{inparaenum}[1)]
\item the privacy cost is explicitly calculated via a new variable $\vpriv$,
and
\item probabilistic instructions are replaced by a new nondeterministic
instruction $\havoc{η}$, which semantically sets variable $η$ to an arbitrary
value upon execution.
\end{inparaenum}

\begin{figure}
\small
\algrule
\funsigfour{TSparseVec}{$T,N,\priv:\tyreal$; $q:\tylist~\tyreal$; \instrument{\distance{q}:\tylist~\tyreal}}{($\cod{out}:\tylist~\bool$)}{$∀i.~‐1≤(\distance{q}[i])≤1$}
\algrule
\begin{lstlisting}[frame=none]
$\instrument{\vpriv := 0;}$
$\instrument{\havoc{η_1};\vpriv := \vpriv+\priv/2;}$
$\tT := T + η_1;$
c1 := 0; c2 := 0; i := 0;
while (c1 $<$ N)$\Suppressnumber$
  $\annotation{\cod{Invariant}: \cod{c1}≤N ∧ \vpriv = \priv/2+\cod{c1}\times \frac{\priv}{2N}}$$\Reactivatenumber$
  $\instrument{\havoc{η_2};\vpriv := \vpriv + (q[i]+η_2≥\tT?2:0)\times \priv/4N;}$
  if ($q[i]+η_2≥\tT$) then
    out:= true::out;
    c1 := c1+1;
  else
    out:= false::out;
    c2 := c2+1;
  i := i+1;
\end{lstlisting}
\caption{The transformed program. The instrumented statements are underlined.
The loop invariant to prove the postcondition
$\protect\vpriv\protect≤\protect\priv$ is shown in grey.}
\label{fig:translated}
\end{figure}

The fundamental soundness theorem of the type system states that, informally,
if
\begin{inparaenum}[1)]
\item the original program type-checks, and
\item $\vpriv$ is always bounded by some constant $\priv'$ in the transformed
program,
\end{inparaenum}
then the original program being verified is $\priv'$-differentially private.

Notice that for the second property (the problem of bounding $\vpriv$), any
off-the-shelf verification tool for functional correctness can be utilized.  For
instance, the program above with the desired postcondition $\vpriv≤\priv$ can
be verified by Hoare logic, with one loop invariant provided by a
programmer (the grey box in Figure~\ref{fig:translated}).

\if 0
The desired property is then easily verified by Hoare logic. Here we use Dafny
for the formal verification.
\begin{lstlisting}
method havoc(x:real) returns (y:real) {}

function f(b:bool,N:int):real 
  requires N>=1
{
  if b then 0.5/real(N) else 0.0
}

method Main(T:real)
{
  ghost var cost:real := 0.0;
  var N:int := 100;
  var eta1:real;
  var eta2:real;
  eta1 := havoc(eta1);
  var tT:real := T+eta1;
  var count:int := 0;
  var alpha:real := 0.0;
  var q:real := 0.0;
  cost := cost + 0.5;
  while (count<N) 
    invariant count<=N
    invariant cost == 0.5+real(count)*0.5/real(N)
 { 
    eta2 := havoc(eta2);
    alpha := q + eta2;
    var cond:real := 0.0;
    cost := cost + f(alpha>=tT,N);
    if alpha>=tT { count := count + 1;}
  } 
  assert cost == 1.0;
}
\end{lstlisting}
\fi

\subsection{Type Inference}

The mechanisms sketched so far provide a light-weight yet powerful formal
method for differential privacy. The annotation burden is much reduced compared
with~\cite{Barthe16}. However, providing the \emph{correct} and \emph{optimal}
type annotations (especially for random variables $η_1$ and $η_2$) is still
subtle for a programmer. 

Although it is folklore that type inference in face of dependent types can be daunting, \lang is equipped
with an inference engine that, at least for many algorithms, not only infers correct annotations, but also
enables finding annotations that minimize privacy cost when multiple annotations exist.  For example,
given the function signature in Figure~\ref{fig:sparsevector}, the inference
algorithm in Section~\ref{sec:fulllang} automatically infers types for all
variables. The inferred types are identical to the ones in
Figure~\ref{fig:sparsevector} except that $\tT$, $η_1$ are assigned with a
distance $α$ and $η_2$ is assigned with a distance expression
$q[i]+η_2≥\tT?β:γ$, where $α,β,γ$ are variables to be inferred, subject to
constraints generated during type checking. For the Sparse Vector method,
multiple correct type annotations exist. For example, $α=1,β=2,γ=0$ corresponds
to the annotation in Figure~\ref{fig:sparsevector}. Moreover, $α = 0,β = 2,γ = −2$
and $α = 2,β = 3,γ = 0$ are both correct annotations.

Given type annotations with distance variables, the type-guided transformation as
sketched above generates target program where variables to be inferred (i.e.,
$α,β,γ$) are used in the calculation of privacy cost. The difference is that
$\vpriv$ increments by $(α\priv)/2$ at line 2 and increments by
$(q[i]+η_2≥\tT?β:γ)\times \priv/4N$ at line 6 in Figure~\ref{fig:translated}.
By Hoare logic, we can easily bound the privacy cost to be $α\priv/2 + β
\priv/4 + \cod{c2}\times γ \priv/4N$.

\if 0
\begin{figure}
\begin{lstlisting}[frame=none]
$\instrument{\vpriv := 0; \vdelta=0;}$
$\instrument{\havoc{η_1};\vpriv := \vpriv+α\times \priv/2;}$
$\tT := T + η_1$;
count := 0;
foreach $q_i$ do
  $\instrument{\havoc{η_2};\vpriv := \vpriv + (q_i+η_2≥\tT?β:γ)\times \priv/4c;}$
  if $q_i+η_2≥\tT$ then
    Output $\top$;
    count = count + 1;
  else
    Output $\bot$;
  if count $≥$ c
    abort
\end{lstlisting}
\caption{Proof inference: the transformed program.}
\label{fig:infer}
\end{figure}
\fi

Putting it all together, finding the optimal proof is equivalent to a MaxSMT problem:
$\min{(2α + β + \cod{c2}\times γ/N)}$, given that constraints generated in type
checking are satisfiable. Using an existing MaxSMT solver,
$μZ$~\cite{vZ-SCSS,vZ-TACAS}, the optimal proof for the Sparse Vector method is
successfully inferred: $α=1,β=2,γ=0$. This is exactly the randomness alignment
used in its proof~\cite{diffpbook}.

\if 0
Here is a translation of these constraints into Z3Opt:
\begin{lstlisting}[language=z3]
(declare-const alpha Int) ; eta_1
(declare-const beta Int)  ; eta_2 when true
(declare-const gamma Int) ; eta_2 when false
(declare-const N Int)
(declare-fun f (Bool) Int)
(assert (= (f true) beta))
(assert (= (f false) gamma))
(assert (forall ((t Real) (q Real) (eta Real) (dq Real) (delta Real))
  (=> (and (<= dq 1) (>= dq -1) (= delta (f (>= (+ q eta) t))))
    (and
      (= (>= (+ q eta) t) 
         (>= (+ (+ q dq) (+ eta delta)) (+ t alpha)))
    )
  )))
(assert (= N 10000))
(minimize (+ (* 2 (abs alpha)) (abs beta) (abs (* N gamma))))
(check-sat)
(get-model)
\end{lstlisting}

Here is the output:
\begin{lstlisting}[language=z3]
(+ (* 2 (abs alpha)) (abs beta) (abs (* N gamma))) |-> 4
sat
(model 
  (define-fun alpha () Int
    1)
  (define-fun N () Int
    10000)
  (define-fun gamma () Int
    0)
  (define-fun beta () Int
    2)
  (define-fun f ((x!1 Bool)) Int
    (ite (= x!1 false) 0
    (ite (= x!1 true) 2
      0)))
)
\end{lstlisting}

Remaining issues: symbolic values in z3? Int to Real?

\paragraph{Benefits}
Compared with existing formal methods, \lang enjoys the following benefits.
First, the light-weight type system greatly reduces annotation burden compared
with previous work based on relational Hoare logic; second, the inference
engine for the first time enables proof inference, which reduces low annotation
burden even further; and third, based on MaxSMT theory, \lang finds the optimal
proof that minimizes privacy cost.
\fi

\section{\lang: A Language for Algorithm Design}\label{sec:langzero}

We first introduce a simple imperative language, \langzero, for designing and
verifying privacy-preserving algorithms. This language is equipped with a
dependent type system that enables formal verification of sophisticated
algorithms where the composition theorem falls short. In this section, we
assume all type annotations are provided by a programmer. We will remove this
restriction and enable type inference in Section~\ref{sec:fulllang}.

\subsection{Syntax}
\label{sec:syntax}
The language syntax is given in Figure~\ref{fig:syntax}. \langzero is mostly a
standard imperative language except for the following features.

\begin{figure}
$
\begin{array}{l@{\ }l@{\ }c@{\ }l}
\text{Reals} & r &\in &\mathbb{R} \\
\text{Booleans} & b &\in &\{\true, \false\} \\
\text{Vars} & x  &\in &\Vars \\
\text{Rand Vars} & η  &\in &Η \\
\text{Linear Ops} & \oplus &::= &+ \mid - \\
\text{Other Ops} & \otimes &::= &\times \mid / \\
\text{Comparators} & \odot &::= &< \mid > \mid = \mid ≤ \mid ≥ \\
\text{Rand Exps}   & g &::=\; &\lapm{\real} \\ 
\text{Expressions} & e &::=\; &\real \mid b \mid x \mid η \mid e_1\oplus e_2 \mid e_1\otimes e_2 \mid e_1 \odot e_2 \mid \\
                   & & & \neg e \mid e_1::e_2 \mid e_1[e_2] \mid e_1?e_2:e_3 \\
\text{Commands} & c &::=\; &\skipcmd \mid x := e \mid η := g \mid c_1;c_2 \mid \outcmd{e} \mid \\ 
                  & & &\ifcmd{e}{c_1}{c_2} \mid \whilecmd{e}{c} \\
\text{Distances}& \dexpr &::=\; & \real \mid x \mid η \mid \dexpr_1 \oplus \dexpr_2 \mid \dexpr_1 \otimes \dexpr_2 \mid \dexpr_1 \odot \dexpr_2 ?\dexpr_3:\dexpr_4 \\
\text{Types} & τ &::=\; &\tyreal_{\dexpr} \mid \tyreal_* \mid \bool \mid \tylist~τ \mid τ_1→τ_2 
\end{array}
$
\caption{\langzero: language syntax.}
\label{fig:syntax}
\end{figure}

\paragraph{Random expressions} 
Probabilistic reasoning is essential in privacy-preserving algorithms. We use
$g$ to represent a random expression. Since \langzero follows a modular design
where new randomness expression can be added easily, we only consider the most
interesting random expression, $\lapm r$, for now. Semantically, $\lapm r$
draws one sample from the Laplace distribution, with mean zero and a scale
factor $r$. We will discuss other random expressions in Section~\ref{sec:categorical}.

Each random expression $g$ can be assigned to a random variable $η$, written
as $η := g$. We distinguish random variables (Η) from normal variables
($\Vars$) for technical reasons explained in Section~\ref{sec:typing}. 
Notice that although the syntax restricts the distribution scale parameters to
be a constant, its mean can be an arbitrary expression $e$, via the legit
expression $e+η$, where $η$ is sampled from a distribution with mean zero.

\paragraph{List operations} 
Sophisticated algorithms usually make multiple queries to a database and
produce multiple outputs during that process. Rather than reasoning about the
privacy cost associated with each query in isolation and total the privacy
costs using the composition theorem, \langzero enables more precise reasoning
via built-in list type operations: $e_1::e_2$ appends
the element $e_1$ to a list $e_2$; $e_1[e_2]$ gets the $e_2$-th element in list
$e_1$, assuming $e_2$ is bound by the length of $e_1$. We also assume a list
variable is initialized to an empty list.

\paragraph{Types with distances}
Each type $τ$ has the form of $\basety_\dexpr$. Here, $\basety$ is a base type,
such as $\tyreal$ (numeric type), $\bool$ (Boolean), or an application of a type
constructor (e.g., $\tylist$) to another type, or a function ($τ_1→τ_2$). $\dexpr$ is a numeric  
expression that (semantically) specifies the exact distance of the values
stored in a variable in two related executions. In particular, a distance
expression is a numeric expression in the language, as specified in
Figure~\ref{fig:syntax}, where $\dexpr_1 \odot \dexpr_2 ?\dexpr_3:\dexpr_4$
evaluates to $\dexpr_3$ when the comparison evaluates to $\true$, and
$\dexpr_4$ otherwise.

Since non-numeric types $\bool$, $\tylist~τ$ and $τ_1→τ_2$ cannot be associated with any
numeric distance, those types are syntactic sugars for
$\bool_0$, $(\tylist~τ)_0$ and $(τ_1→τ_2)_0$ respectively. Notice that elements in a list of
type $(\tylist~τ)_0$ (e.g., parameter $q$ in Figure~\ref{fig:sparsevector}) may
still have different elements in two related executions, since the difference
of elements is specified by type $τ$. The subscript $0$ here simply
restricts the list size in two related executions.

\paragraph{Star type} 
\langzero also supports sum types, written as $\basety_*$, a syntactic sugar
 of a Sigma type. More specifically, a variable $x$ with type $\tyreal_*$
is desugared as $x:\Sigma_{(\distance{x}:\tyreal_0)}~\tyreal_{\distance{x}}$,
where $\distance{x}$ is a distinguished variable invisible in the source code, but
can be reasoned about and manipulated by the type system. Hiding the first
component of a Sigma type simplifies verification
(Section~\ref{sec:soundness}).

The parameter $q$ in Figure~\ref{fig:sparsevector} is one example where the
star type is useful. Moreover, the star type enables reasoning about
dependencies that cannot be captured otherwise by a distance expression.
Consider the $\priv$-differentially private Partial Sum algorithm in Figure~\ref{fig:partialsum}. It
 implements an immediate solution to answering the sum
of a query list in a privacy preserving manner\footnote{We use this trivial
algorithm here for its simplicity. We will analyze a more sophisticated version
in Section~\ref{sec:casestudy}.}: it aggregates the accurate partial sum in a
loop, and releases a noisy sum using the Laplace mechanism. The precondition
specifies the adjacency assumption: at most one query answer may differ by at
most $b$.

\begin{figure}
\small

\algrule
\funsigfour{PartialSum}{$\priv,b,\cod{size}:\tyreal_0; q:{\tylist~\tyreal_*} $}{($\cod{out}:\tyreal_0$)}{$∀i≥0.~\distance{q}[i]≤b ∧$ \par \makebox[1.1cm]{} 
                                        $∀i≥0.~\distance{q}[i]>0 ⇒ (∀j>i.~\distance{q}[j]=0)$}
\algrule
\begin{lstlisting}[frame=none]
$\annotation{\cod{sum}:\tyreal_*;~\cod{i}:\tyreal_0;η:\tyreal_{-\distance{\cod{sum}}}}$
sum := 0; i := 0;
while (i < size)
  sum := sum+q[i];
  i := i+1;
$η$ = $\lapm{b/\priv}$;
out := sum + $η$;
\end{lstlisting}
\caption{An \priv-differentially private algorithm for summing over a query list.}
\label{fig:partialsum}
\end{figure}

In this algorithm, the distance of variable $\cod{sum}$ changes in each
iteration. Hence, the accurate type for $\cod{sum}$ is $\tyreal_{\sum_{j=0}^i
\distance{q}[j]}$. However, with the goal of keeping type system as
light-weight as possible, we assign $\cod{sum}$ to a star type. The type system
will reason about and manipulate distance component $\distance{\cod{sum}}$ in a
sound way (Section~\ref{sec:typing}).

\subsection{Semantics}
\label{sec:semantics}

\begin{figure}
\begin{align*}
\trans{\skipcmd}_\store &= \unitop~m \\
\trans{x:=e}_\store &= \unitop~(\subst{m}{x}{\evalexpr{e}{m}}) \\
\trans{η:=g}_\store &= \bindop~\evalexpr{g}{}~(λv.~\unitop~\subst{m}{η}{v}) \\
\trans{c_1;c_2}_\store &= \bindop~(\evalexpr{c_1}{\store})~\trans{c_2}  \\
\trans{\ifcmd{e}{c_1}{c_2}}_\store &= 
      \begin{cases} 
      \trans{c_1}_\store &\mbox{if } \evalexpr{e}{\store} = \true \\
      \trans{c_2}_\store &\mbox{if } \evalexpr{e}{\store} = \false \\
      \end{cases} \\
\trans{\whilecmd{e}{c}}_\store &= w^*~m\\
\text{where } w^*              &= fix (λf.~λm. \cod{if}~\evalexpr{e}{m}=\true\\
                               &\qquad \cod{then}~(\bindop~\evalexpr{c}{m}~f)~\cod{else}~{(\unitop~m)}) \\
\trans{c;\outcmd{e}}_\store &= \bindop~(\evalexpr{c}{\store})~(λm'.~\unitop~\evalexpr{e}{m'})
\end{align*}
\caption{\langzero: language semantics.}
\label{fig:semantics}
\end{figure}

The denotational semantics of the probabilistic language is defined as a mapping
from initial memory to a distribution on (possible) final outputs. Formally,
let $\Store$ be a set of memory states where each memory state $\store \in
\Store$ is an assignment of all (normal and random) variables ($\Vars∪Η$) to
values.
First, an expression $e$ of base type $\basety$ is interpreted as a function
$\trans{e}: \store → \trans{\basety}$, where $\trans{\basety}$ represents the
set of values belonging to the base type $\basety$. We omit expression
semantics since it is mostly standard\footnote{The reals in \lang only come
from sampling (or, the havoc command, which mimics sampling). We assume the
sample space is either finite or countable.}.

A random expression $g$ is interpreted as a distribution on real values. Hence,
$\trans{g}:\dist(\trans{\tyreal})$. Moreover, a command $c$ is interpreted as a
function $\trans{c}:\Store → \dist(\Store)$. For brevity, we write
$\evalexpr{e}{\store}$ and $\evalexpr{c}{\store}$ instead of
$\trans{e}(\store)$ and $\trans{c}(\store)$ hereafter.
Figure~\ref{fig:semantics} provides the semantics of commands, where functions
$\unitop$ and $\bindop$ are defined in Section~\ref{sec:dist}. This semantics
corresponds directly to a semantics given by \citet{Kozen81}, which interprets
programs as continuous linear operators on measures. 

Finally, we assume all programs have the form $(c;\outcmd{e})$ where $c$ does
not contain return statements. A \langzero program is interpreted as a function
$\store → \dist{\trans{\basety}}$, defined in Figure~\ref{fig:semantics}, where
$\basety$ is the type of expression returned ($e$).

\subsection{Typing Rules and Target Language}
\label{sec:typing}

\begin{figure}
\framebox{Typing rules for expressions.}
\begin{mathpar}
\inferrule*[right=(T-Num)]{ }{ Γ  \proves \real : \tyreal_0}
\and
\inferrule*[right=(T-Boolean)]{ }{ Γ  \proves b : \bool}
\and
\inferrule*[right=(T-Var)]{ }{ Γ, x:\basety_{\dexpr}  \proves x : \basety_{\dexpr}}
\quad
\inferrule*[right=(T-VarStar)]{ }{ Γ, x:\basety_*  \proves x : \basety_{\distance{x}}}
\and
\inferrule*[right=(T-OPlus)]{Γ\proves e_1: \tyreal_{\dexpr_1} \quad Γ\proves e_2 : \tyreal_{\dexpr_2}}{ Γ \proves e_1\oplus e_2 : \tyreal_{\dexpr_1\oplus \dexpr_2}}
\and
\inferrule*[right=(T-OTimes)]{Γ\proves e_1: \tyreal_0 \quad Γ\proves e_2 : \tyreal_0}{Γ \proves e_1\otimes e_2 : \tyreal_0}
\and
\inferrule*[right=(T-ODot)]{\inferrule*{}{Γ\proves e_1: \tyreal_{\dexpr_1} \\\\ Γ\proves e_2 : \tyreal_{\dexpr_2}} \qquad
                             \inferrule*{}{\pre ⇒ (e_1\odot e_2 \\\\ ⇔ (e_1\!+\!\dexpr_1)\odot (e_2\!+\!\dexpr_2))}}
                             { Γ \proves e_1\odot e_2 : \bool}
\and
\inferrule*[right=(T-Neg)]{Γ\proves e: \bool}{ Γ \proves \neg e : \bool}
\quad
\inferrule*[right=(T-Cons)]{Γ\proves e_1:τ \quad Γ\proves e_2:\tylist~τ}
               {Γ \proves e_1::e_2 : \tylist~τ}
\and
\inferrule*[right=(T-Index)]{Γ\proves e_1:\tylist~τ\quad Γ\proves e_2:\tyreal_0}{Γ \proves e_1[e_2]:τ}
\and
\inferrule*[right=(T-Select)]{Γ\proves e_1:\bool \quad Γ\proves e_2:τ \quad Γ\proves e_3:τ}{Γ \proves e_1?e_2:e_3:τ}
\end{mathpar}

\framebox{Typing rules for commands}
\begin{mathpar}
\inferrule*[right=(T-Skip)]{ }{Γ \proves \skipcmd \transform \skipcmd}
\and
\inferrule*[right=(T-Asgn)]{ Γ\proves e: τ \quad Γ\proves x: \basety_\dexpr \quad τ=\basety_\dexpr }
                           {Γ \proves x := e \transform x := e}
\and
\inferrule*[right=(T-AsgnStar)]{ Γ\proves x: \basety_{\distance{x}} \and Γ\proves e:\basety_\dexpr }
                           {Γ \proves x := e \transform x := e; \distance{x} := \dexpr;}
\and
\inferrule*[right=(T-Seq)]{Γ \proves c_1\transform c_1' \quad Γ \proves c_2\transform c_2'}{Γ \proves c_1;c_2  \transform c_1';c_2'}
\and
\inferrule*[right=(T-Return)]{Γ \proves e:\tyreal_0 {\text\quad or\quad } Γ \proves e:\bool_0}
                             {Γ \proves \outcmd{e}\transform \outcmd{e}}
\\
\inferrule*[right=(T-If)]{Γ\proves e : \bool \quad Γ \proves c_i\transform c_i' \text{ where } i\in\{1,2\}}
{ Γ \proves \ifcmd{e}{c_1}{c_2} \transform \ifcmd{e}{c_1'}{c_2'}}
\\
\inferrule*[right=(T-While)]{Γ\proves e:\bool \and Γ\proves c\transform c'}{Γ \proves \whilecmd{e}{c} \transform \whilecmd{e}{c'}}
\end{mathpar}

\framebox{Typing rules for random assignments}
\begin{mathpar}
\inferrule*[right=(T-Laplace)]{ Γ(η) = \tyreal_{\dexpr}}
                           {Γ \proves η := \lapm~\real \transform \havoc{η}; \vpriv = \vpriv+|\dexpr|/r;}
\end{mathpar}

\if0
\framebox{Faithfulness condition for sampling}
If $Γ\proves η := g \transform c$, then for any $\store, \store'$
such that $\evalexpr{\store}{\samples} \beval \store'$, and
$\configtwo{Γ(x)}{\store}\beval v$, we have
\begin{multline*}
\store'(\vpriv)-\store(\vpriv) = \priv ∧ \store'(\vdelta)-\store(\vdelta)=δ \\
⇒ ∀r\in\mathbb{R}.~P(\store'(x)=r)≤\exp(\priv) P(\store'(x)=r+v) + δ
\end{multline*}
\fi
\caption{Typing rules. $\pre$ is an invariant that holds throughout program
execution.}
\label{fig:typing}
\end{figure}

We assume a typing environment $Γ$ that tracks the type of each variable
(including random variable). For now, we assume a type annotation is provided for
each variable (i.e., $\dom(Γ)=\Vars∪Η$), but we will remove this restriction in
Section~\ref{sec:fulllang}. The typing rules are formalized in
Figure~\ref{fig:typing}. Since all typing rules share a global invariant $\pre$
(e.g., the precondition in Figure~\ref{fig:sparsevector}), typing rules do not
propagate $\pre$ for brevity. We also write $Γ(x)=\dexpr$ for
$∃\basety.~Γ(x)=\basety_\dexpr$ when the context is clear.

\paragraph{Expressions}
For expressions, each rule has the form of $Γ\proves e:τ$, meaning that the
expression $e$ has type $τ$ under the environment $Γ$. Rule~\ruleref{T-OPlus}
precisely tracks the distance of linear operations (e.g., $+$ and $-$), while
rule~\ruleref{T-OTimes} makes a conservative assumption that other numerical
operations take identical parameters. It is completely possible to refine
rule~\ruleref{T-OTimes}) (e.g., by following the sensitivity analysis proposed
by~\citet{Fuzz,DFuzz}) to improve precision, however, we leave that as future
work since it is largely orthogonal.

Rule~\ruleref{T-VarStar} applies when variable $x$ has a star type. This rule
unpacks the corresponding pair with Sigma type and makes $\distance{x}$
explicit in the type system.

The most interesting and novel rule is \ruleref{T-ODot}. It type-checks a
comparison of two real expressions by generating a constraint:
\[\pre ⇒ (e_1\odot e_2 ⇔ (e_1\!+\!\dexpr_1)\odot (e_2\!+\!\dexpr_2))\]

Intuitively, this constraint requires that in two related executions, the
Boolean value of $e_1\odot e_2$ must be identical since the distances of $e_1$
and $e_2$ are specified by $\dexpr_1$ and $\dexpr_2$ respectively.
For example, consider the branch condition $q[i]+η_2≥\tT$ in
Figure~\ref{fig:sparsevector}.  Rule~\ruleref{T-ODot} first checks types for
subexpressions: 
\[Γ\proves q[i]+η_2: \tyreal_{\distance{q}[i]+(q[i]+η_2≥\tT?2:0)} \text{ and } Γ\proves \tT: \tyreal_1\]
Then the following constraint is generated (free variables in the generated
constraint are universally quantified):
\vspace{-2ex}
\begin{multline*}
∀q_i,\distance{q_i},η_2,\tT\in \mathbb{R}.~(-1≤\distance{q_i}≤1)⇒ \\
q_i+η_2≥\tT ⇔
q_i+η_2+\distance{q_i}+(q_i+η_2≥\tT?2:0) ≥ \tT+1
\end{multline*}

This proof obligation captures a subtle yet important property of the Sparse
Vector method: given the randomness alignment as specified by $Γ$, two related
executions must take the same branch (hence, produce the same output).  This
proof obligation can easily be discharged by an external SMT solver, such as
Z3~\cite{z3}.
\if0
Here is the corresponding Z3 code:
(declare-const q (Array Int Real))
(declare-const dq (Array Int Real))
(declare-const i Int)
(declare-const eta Real)
(declare-const T Real)
(declare-fun f (Bool) Int)
(assert (= (f true) 2))
(assert (= (f false) 0))
(define-fun prop () Bool
  (=> (forall ((j Int)) (and (<= -1 (select dq j)) (<= (select dq j) 1)))
      (=    (>= (+ (select q i) eta)  T)
            (>= (+ (select q i) eta (select dq i) (f (>= (+ (select q i) eta)  T))) (+ T 1)))
  )
)
(assert (not prop))
(check-sat)
\fi

\paragraph{Target language}
The typing rules for a command have the form of $Γ\proves c\transform c'$, where
$c$ is the original program being verified, and $c'$ is the transformed program
in the target language defined in Figure~\ref{fig:targetlang}. The target
language is mostly identical to the original one, except for two significant
differences:
\begin{inparaenum}[1)]
\item the target language involves a distinguished variables $\vpriv$ to
explicitly track the privacy cost in the original program; 
\item the target language removes probabilistic expressions, and introduces a
new nondeterministic command $(\havoc{x})$, which sets variable $x$ to an
arbitrary value upon execution.
\end{inparaenum}
Hence, the target language is nonprobabilistic.

Due to nondeterminism, the denotational semantics interprets a command $c$ in
the target language as a function $\trans{c}: \Store→\mathcal{P}(\Store)$. For
example, the semantics of the $\cod{havoc}$ command is defined as follows:
\[\evalexpr{\havoc{x}}{m} = ∪_{r\in \mathbb{R}}~\{\subst{m}{x}{r}\} \]
Other commands have a standard semantics, hence their semantics are included in
\ifreport
the appendix.
\else
the full version of this paper~\cite{report}. 
\fi
Note that for simplicity, we abuse the notation $\trans{c}$ to denote the
semantics of both the source language and target language. However, its meaning
is unambiguous in the context of a memory $m$: when $\vpriv \not\in \dom(m)$,
$\evalexpr{c}{m}$ denotes a distribution; otherwise, $\evalexpr{c}{m}$ denotes
a set.

\begin{figure}
$
\begin{array}{l@{\ }l@{\ }c@{\ }l}
\text{Vars} & x &\in &\Vars ∪ Η ∪ \{\vpriv\} \\
\text{Statements} & c &::=\; &\skipcmd \mid x := e \mid \havoc{x} \mid c_1;c_2 \mid \outcmd{e} \\ 
                  & & & \ifcmd{e}{c_1}{c_2} \mid \whilecmd{e}{c} \\
\end{array}
$
\caption{Target language syntax. The omitted parts are identical to the source
language defined in Figure~\ref{fig:syntax}.}
\label{fig:targetlang}
\end{figure}

\paragraph{Commands}
Informally, if $Γ\proves c \transform c'$, and the distinguished variable
$\vpriv$ in $c'$ is bounded by some constant $\priv'$ in all possible
executions, then program $c$ is $\priv'$-differentially private. Here, we discuss important
typing rules to enforce this property.  We will formalize this soundness
property and sketch a proof in Section~\ref{sec:soundness}.

For an assignment $x:=e$, rule~\ruleref{T-Asgn} \emph{synthesizes} the types of
$x$ and $e$, and \emph{checks} that their types are equivalent (i.e., both the
base type and distance are equivalent). However,
rule~\ruleref{T-AsgnStar} instruments the original program so that the typing
invariant (i.e., the distance of $x$ is exactly $\distance{x}$) is maintained
after the assignment. Consider line $4$ in Figure~\ref{fig:partialsum}.
Rule~\ruleref{T-AsgnStar} first checks subexpressions: $Γ\proves
\cod{sum}+q[i]:\distance{\cod{sum}}+\distance{q}[i]$. Hence, the transformed
program is $(\cod{sum} := \cod{sum}+q[i];\distance{\cod{sum}} :=
\distance{\cod{sum}}+\distance{q}[i]$), which correctly maintains the typing
invariant after the assignment.

\ruleref{T-Return} checks that the returned value is indistinguishable in two
related executions. Both~\ruleref{T-If} and~\ruleref{T-While} check that
two related executions must follow the same control flow.

\paragraph{Laplace mechanism} Intuitively, \ruleref{T-Laplace} assigns a
polymorphic type to the random source $\lapm{r}$. In other words, for any
distance $\dexpr$ of random variable $η$, we can instantiate the type of
$\lapm{r}$ to be $\tyreal_\dexpr$, though with a privacy cost of $|\dexpr|/r$.
Moreover, the transformed program sets $η$ to a nondeterministic value
($\havoc{η}$), since any real value can be sampled from the Laplace
distribution.

Consider line $1$ in Figure~\ref{fig:sparsevector}. \ruleref{T-Laplace}
transforms this line to $(\havoc{η_1};\vpriv:=\vpriv+\priv/2)$ since
$Γ(η_1)=\tyreal_1$. Informally, the transformation says that by paying a privacy
cost of $\priv/2$, $\lapm~(2/\priv)$ ensures that $η_1$ has a distance of one
in two related executions. 

Moreover, consider line $6$ in Figure~\ref{fig:partialsum}. \ruleref{T-Laplace}
transforms this line to $(\havoc{η};\vpriv:=\vpriv+|\distance{\cod{sum}}|\priv/b)$
since $Γ(η)=-\distance{\cod{sum}}$. Informally, the transformation says that by paying a
privacy cost, $η$ has a distance of $-\distance{\cod{sum}}$ in two related
executions. Hence, we can cancel out the distance of $\cod{sum}$ in line 7.

\paragraph{Dependent types and imperative programming}
Mutable states in imperative programming brings subtleties that are not
foreseen in the standard theory of dependent types~\cite{intuitionistic}.
Consider a variable $x$ with type $\tyreal_y$ where $y$ is initialized to zero.
The type of $x$ establishes an invariant on its values $v_1,v_2$ under two
executions: $\leftv{v}=\rightv{v}$. However, if we update the value of $y$ to
$1$, the invariant changes to $\leftv{v}+1=\rightv{v}$, but the values of $x$
in two executions remains unchanged. Hence, the type invariant is broken.

To address this issue, we assume the following assumptions are checked before
type checking. First, for each normal variable $x\in \Vars$ such that
$Γ(x)=\basety_\dexpr$, all free variables in $\dexpr$ are immutable. For
example, each normal variables in Figure~\ref{fig:sparsevector} has a constant
distance in its type. Note that by language syntax, this restriction does not
apply to variables with a star type.
Second, a random variable $η\in Η$ may depend on mutable variables. However, we
assume that it has only \emph{one use} other than the definition, and the
definition of $η$ is adjacent to its use. Hence, each variable that $η$ depends
on appears immutable between $η$'s definition and use~\footnote{The one-use
assumption may appear restrictive at first glance, but when $η$'s type has no
dependency on mutable variables, we can always store $η$ to a normal variable
to circumvent this restriction. When $η$'s type depends on a mutable variable
and multiple uses of $η$ are needed, we can store $η$'s value to a normal
variable with star type, whose distance counterpart is manipulated and reasoned
about by the type system.}.
 
\if0
Note that when $S=∅$, the difference is 0 by definition. so
$\distance{\priv}(μ_1,μ_2)≥0$ for any $\priv, μ_1, μ_2$.

By the definition of \priv-distance, a probabilistic program $c$ is
\priv-differentially private with respect to $\priv>0$ and a relation $\pre$ on
two initial stores $\store_1$ and $\store_2$ if
\[\distance{\priv}(\trans{c}_{\store_1},\trans{c}_{\store_2})≤0\]

The proof of our main theorem relies on a lifting operator that turns a
relation on stores into a relation on distributions over stores. Intuitively,
the lifting relates two distributions $μ_1\in D(A)$ and $μ_2\in D(B)$ whenever
there exists a distribution $μ$ over $A\times B$ whose support is contained in
relation $R$ and whose first and second projections are at most at
\priv-distance δ of $μ_1$ and $μ_2$ respectively.

The usefulness of the lifting of relations over stores is established by the
following theorem.

\begin{theorem}
Let $μ_1\in D(A), μ_2\in D(B)$, and $R⊆A\times B$. Then for any two functions
$f_1:A→[0,1]$ and $f_2:B→[0,1]$,
\[μ_1 \sim^{\priv,δ}_R μ_2 ⇒ \distance{\priv}(μ_1 f_1, μ_2 f_2) ≤ δ \]
\end{theorem}

In particular, if $A=B$ and $R$ is the equality relation, then 
\[μ_1 \sim^{\priv,δ}_= μ_2 ⇒ \distance{\priv}(μ_1, μ_2) ≤ δ \]

For Laplace mechanism, the following lemma is proved in previous work:
\begin{lemma}
\[\distance{|\leftv{x}-\rightv x|\priv}(\lapm{x}{\priv}, \lapm{\rightv x}{\priv})≤0\]
\end{lemma}

The theorem above provides an tyrealrpretation of lifting in terms of
\priv-distance. Next, we state the main theorem of this work. 

Also useful is the following lemma (Lemma~8 in~\cite{Barthe14}):
\begin{lemma}
\label{lem:distribution}
For all distribution expressions $μ_1,μ_2$, if 
\[\distance{\priv}(\evalexpr{μ_1}{\store},\evalexpr{μ_2}{\rightv \store})≤δ\]
then
\[(\evalexpr{x:=μ_1}{\store})Q_{\priv,δ}(\evalexpr{x:=μ_2}{\rightv \store})\]
where $Q=\{\store_1,\store_2|\store_1(x)=\store_2(x)\}$.
\end{lemma}

\begin{theorem}
if we have $\proves c : \pre ⇒ \post ∧ \priv ≤ R$, then program $c$ is \priv-private.
\end{theorem}

This theorem follows the Lemma below.

\begin{lemma}
if we have $\proves c : \pre ⇒ \post ∧ \priv ≤ R$, then 
\[∀\store_1, \store_2~.~\store_1 \pre \store_2 ⇒ (\trans{c}_{\store_1}) \sim^{\priv,δ}_{\post} (\trans{c}_{\store_2})\]
\begin{proof}
We introduce some notations to aid the proof first.

For disjoint stores $\store_1$ and $\store_2$ and real value \priv, $\store_1
\joinmem{\priv} \store_2$ denotes the joint memory $\store$ such that
$\store(x)=\store_1(x)$ for all variables defined in $\store_1$,
$\store(x)=\store_2(x)$ for all variables defined in $\store_2$, and
$\store(\vpriv)=\priv$.

Given a relation $R$ on $\Store \times \Store$, let $\hat{R}_\priv$ be the set
$\{\store_1 \joinmem{\priv'} \store_2~|~(\store_1,\store_2)\in R ∧
\priv'≤\priv\}$. The proof follows by showing the following lemma:
\begin{align*}
                & (∀\store.~\store\in\hat{R}_\priv ⇒ ∀\store'.~\store'\in (\evalexpr{c}{\store}) ⇒ \store'\in\hat{Q}_{\priv'}) \\
\Longrightarrow &\  \\
                & ∀\store_1,\store_2.~(\store_1,\store_2)\in R ⇒ (\evalexpr{c}{\store_1})~Q_{\priv'-\priv}(\evalexpr{c}{\store_2})
\end{align*}

Most proof just follow the proof in previous work. The tyrealresting case is for the added case
\begin{mathpar}
\inferrule*[right=(T-If)]{∃R_1,R_2.~\inferrule*{}{
                           (\subst{\pre∧e}{x}{x+R_1}⇒\rightv e) ∧ (\rightv e ⇒ \subst{\pre∧e}{x}{x-R_1}) \\
                           (\subst{\pre∧\neg e}{x}{x+R_2}⇒\neg \rightv e) ∧ (\neg \rightv e ⇒ \subst{\pre∧\neg e}{x}{x-R_2})}}
{ Γ \proves \ifcmd{e}{(o:=\top)}{(o:=\bot)} : \pre ⇒ \leftv{o}=\rightv o} 
\end{mathpar}

By the lemma in previous work, we need to prove that
$\distance{|\priv|}{(\evalexpr{o}{\leftv \store},\evalexpr{o}{\rightv \store'})}
≤ 0$. That is, for all $r$,
\[P(\evalexpr{o}{\store}=r)-2^{|\priv|}P(\evalexpr{o}{\store'}=r) ≤ 0 ∧ P(\evalexpr{o}{\store'}=r)-2^{|\priv|}P(\evalexpr{o}{\store}=r) ≤ 0\]
which is equivalent to prove 
\[∀b\in\{\true,\false\}.~P(\evalexpr{e}{\leftv \store}=b)/P(\evalexpr{e}{\rightv \store}=b) ≤ 2^{|\priv|} ∧ 
                         P(\evalexpr{e}{\rightv \store}=b)/P(\evalexpr{e}{\leftv \store}=b) ≤ 2^{|\priv|}\]

Here, we prove the case when $b=\true$. The other case can be proven in a
similar way. 

When $r=\true$, let $\store_u=\subst{\store}{x}{x+R}$ and
$\store_l=\subst{\store}{x}{x-R}$. We have 
\[P(\evalexpr{e}{\leftv \store})≤2^{R_1\times |\priv|} P(\evalexpr{e}{\store_u})\] and 
\[P(\evalexpr{e}{\store_l})≤2^{R_1\times |\priv|} P(\evalexpr{e}{\leftv \store})\] 
by Laplace mechanism. Since $\evalexpr{e}{\store_u} ⇒ \evalexpr{e}{\rightv
\store}$ by hypothesis, $P(\evalexpr{e}{\store_u})≤P(\evalexpr{e}{\rightv
\store})$. Hence, $P(\evalexpr{e}{\leftv \store}) ≤ 2^{R_1\times |\priv|} P(\evalexpr{e}{\rightv \store_u})
≤ 2^{R_1\times |\priv|} P(\evalexpr{e}{\rightv \store})$.
Similarly, since $\evalexpr{e}{\rightv \store} ⇒ \evalexpr{e}{\store_l}$ by
hypothesis, $P(\evalexpr{e}{\rightv \store})≤ P(\evalexpr{e}{\store_l})$.
Hence, $P(\evalexpr{e}{\rightv \store}) ≤ P(\evalexpr{e}{\store_l}) ≤
2^{R_1\times |\priv|} P(\evalexpr{e}{\leftv \store})$.

\end{proof}
\end{lemma}
\fi

\section{Soundness}
\label{sec:soundness}

The type system in Section~\ref{sec:typing} enforces a fundamental property: if
$Γ\proves c\transform c'$ and $\vpriv$ in $c'$ is bounded by some constant
$\priv$, then the original program being verified is $\priv$-differentially private. 

To formalize and prove this soundness property, we first notice that a typing
environment $Γ$ defines a relation on two memories, since $Γ$
specifies the exact distance of each variable:
\begin{definition}[Γ-Relation]
\label{def:relation} Two memories $\store_1$ and $\store_2$ are related by a
typing environment $Γ$, written $\store_1~Γ~\store_2$, iff \[∀x\in
\Vars.~\store_1(x)+\evalexpr{\dexpr_x}{\store_1}=\store_2(x),\text{ where } Γ\proves x:\basety_{\dexpr_x}\]
\end{definition} 
Note that since $Γ(x)$ might be a dependent type, the definition needs to
evaluate the distance of $x$ ($\dexpr_x$) under $m_1$.

By the definition above, $Γ$ is a function since for any memory $m$, the
distance for each variable in the related memory of $m$ is a constant.  Hence,
we also write $Γ(m)$ to represent the unique $m'$ such that $m~Γ~m'$.
Moreover, given a set of distinct memories $S⊆\Store$, we define $Γ~S\defn
\{Γ(m) \mid m\in S\}$. Note that by definition, $Γ~S$ is also a set of distinct
memories (hence, $\emph{not}$ a multiset).
Furthermore, we assume that $Γ$ is an injective function. We make
this assumption explicit by the following definition. 

\begin{definition}[Well-Formed Γ-relation]
A typing environment $Γ$ is well-formed, written $\proves Γ$, iff $Γ$ is an
injective function. 
\end{definition}

Checking  well-formedness of the  Γ-relation is straightforward. Intuitively,
$Γ$ is well-formed when there is no ``circular'' dependency, while more careful
analysis is needed for circular dependencies. Consider the Sparse Vector method
in Figure~\ref{fig:sparsevector} and any $m_1$ and $m_2$ such that
$Γ~m_1=Γ~m_2=m$ for some $m$.
For a variable $y$ with a constant distance $\cod{v}$ (e.g.,
$\tT,η_1,\distance{q}[i]$), we have $m_1(y)=m(y)-\cod{v}=m_2(y)$. So $m_1$ and
$m_2$ must agree on those variables.  Then for any variable $z$ that depends on
variables that $m_1$ and $m_2$ already agree on, the distance of $z$ must be
identical in $m_1$ and $m_2$; hence, $m_1(z)=m_2(z)$. For the circular
dependency on variable $η_2$ (whose distance depends $η_2$), consider
$\cod{t}=\evalexpr{\tT-q[i]}{m_1}=\evalexpr{\tT-q[i]}{m_2}$. The mapping for
$η_2$ is $v\mapsto v+2$ when $v>\cod{t}$ and $v\mapsto v$ otherwise. Since this
mapping is strictly monotonic, it is injective. 

For differential privacy, we are interested in the relationship between two
\emph{memory distributions}. Given a typing environment $Γ$ and constant
$\priv$, we define the $(Γ,\priv)$ distance, written $\gammadis$, of two memory
distributions:
\begin{definition}[$Γ_\priv$-distance]
The $Γ_{\priv}$-distance of two distributions $μ_1, μ_2\in \dist{(\Store)}$, written
$\gammadis(μ_1,μ_2)$, is defined as: 
\[\gammadis(μ_1,μ_2)\defn \max_{S⊆\Store} (μ_1(S)-\exp(\priv) μ_2(Γ~S))\] 
\end{definition}

Note that when $S=∅$, the distance is 0 by definition. So
$\gammadis(μ_1,μ_2)≥0$ for any $\priv, μ_1, μ_2$. 


\if0
Hence, we lift a relation to distributions w.r.t. two constants $\priv$ and
$\delta$:

\begin{definition}[Γ-Lifting]
Two distributions on a set $A$, $μ_1,μ_2\in \dist(A)$, are related by the
(probabilistic) lifting of a relation $R\in A\times A$ w.r.t. constants $\priv,
\delta$, written $μ_1~R_{\priv, \delta}~μ_2$, if there is a distribution $μ\in
\dist(A \times A)$ (called a witness) such that
\begin{enumerate}
\item $∀a_1,a_2\in A.~μ(a_1,a_2)>0⇒a_1~R~a_2$
\item $π_1(μ)≤μ_1$ and $π_2(μ)≤μ_2$
\item $\distance{\priv}(π_1(μ),μ_1)≤\delta ∧ \distance{\priv}(π_2(μ),μ_2)≤\delta$
\end{enumerate}
where $π_1(μ)~a=\sum_{a'\in A} π(a,a')$, $π_2(μ)~a=\sum_{a' \in
A} π(a',a)$, and $μ_1≤μ_2$ iff $∀a\in A.~μ_1(a)≤μ_2(a)$.
\end{definition}

An important property of lifting is that the $\priv$-distance of two
distributions is closely related to the lifting of the equality relation:

\begin{theorem}[Lifting and $\priv$-distance~\cite{Barthe12}]
\label{thm:lifting}
Let $μ_1, μ_2\in \dist(A)$, then $μ_1 =_{\priv, \delta} μ_2 ⇒
\distance{\priv}(μ_1,μ_2)≤\delta$.
\end{theorem}
\fi

The soundness theorem connects the ``privacy cost'' of the probabilistic
program to the distinguished variable $\vpriv$ in the transformed
nonprobabilistic program. In order to formalize the connection,
we first extend memory in the source language to include $\vpriv$:

\begin{definition}
For any memory $m$ and constant $\priv$, there is an extension of $m$,
written $\extmem$, so that 
\begin{align*}
  & ∀x\in \dom(m).~\extmem(x) =m(x) \\
∧~& \extmem(\vpriv)=\priv
\end{align*}
\end{definition}

Next, we introduce useful lemmas and theorems. First, we show that the
type-directed transformation $Γ\proves c\transform c'$ is \emph{faithful}. In
other words, for any initial memory $m$ and program $c$, memory $m'$ is a
possible final memory iff for initial extended memory $\extmemfull{m}{0}{0}$
and $c'$, one final memory is an extension of $m'$.
\begin{lemma}[Faithfulness]
\begin{multline*}
∀m,m',c,c',Γ.~Γ\proves c\transform c' ⇒ \\
\evalexpr{c}{m}(m')\not= 0 ⇔ ∃\priv.~\extmemfull{m'}{\priv}{\delta}\in \evalexpr{c'}{\extmemfull{m}{0}{0}}
\end{multline*}
\end{lemma}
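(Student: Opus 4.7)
The plan is to induct on the typing derivation $\Gamma \vdash c \transform c'$, handling each typing rule in turn. The key observation is that every transformation rule either leaves the source statement intact, or augments it only with updates to the auxiliary variables $\vpriv$ and $\distance{x}$ (for star-typed $x$), neither of which is read or written by any original-language construct. Consequently, at each step, the set of reachable ``source'' components of the target memory coincides with the support of the source distribution.

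\paragraph{Base and straightforward cases.}
For \ruleref{T-Skip}, \ruleref{T-Asgn}, and \ruleref{T-Return}, the source semantics is a Dirac distribution $\unitop~m''$ on a unique $m''$, and the target statement is deterministic and produces exactly $\extmemfull{m''}{0}{0}$ since $\vpriv$ is untouched; the equivalence is then immediate with $\priv = 0$. For \ruleref{T-AsgnStar}, the target additionally updates the ghost variable $\distance{x}$. Since $\distance{x}$ is not mentioned in the source program and, by the semantics of $\uplus$, the extension agrees with $m'$ on all source variables, the extra update is invisible to the biconditional. For \ruleref{T-Laplace}, the source distribution $\lapm~r$ has full support on $\mathbb{R}$, so $\evalexpr{\eta := \lapm~r}{m}(m')\neq 0$ holds iff $m'=\subst{m}{\eta}{v}$ for some $v\in\mathbb{R}$; the target $\havoc{\eta}; \vpriv := \vpriv + |\dexpr|/r$ reaches $\subst{\subst{m}{\eta}{v}}{\vpriv}{|\dexpr|/r}$ for exactly the same set of $v$'s, which is $\extmemfull{m'}{|\dexpr|/r}{}$.

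\paragraph{Compositional cases.}
For \ruleref{T-Seq}, I would expand $\evalexpr{c_1;c_2}{m} = \bindop~(\evalexpr{c_1}{m})~\trans{c_2}$ and observe that $\evalexpr{c_1;c_2}{m}(m') > 0$ iff there exists an intermediate $m''$ with $\evalexpr{c_1}{m}(m'')>0$ and $\evalexpr{c_2}{m''}(m')>0$. Applying the IH to each premise, this is equivalent to the existence of $\priv_1,\priv_2$ and an intermediate extended memory $\extmemfull{m''}{\priv_1}{}$ reachable in $c_1'$ such that $\extmemfull{m'}{\priv_1+\priv_2}{}$ is reachable in $c_2'$ starting from $\extmemfull{m''}{\priv_1}{}$; this exactly matches target-level sequential composition because the transformed statements only add monotonically to $\vpriv$. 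The \ruleref{T-If} case follows directly from the IH applied to the branch selected by $\evalexpr{e}{m}$, using the fact that guards do not touch $\vpriv$.

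\paragraph{The loop case and main obstacle.}
The hard case will be \ruleref{T-While}. Here the source semantics is the least fixpoint $w^* = \fix(F)$ of the functional $F(f)(m) = \kif~\evalexpr{e}{m}~\kthen~(\bindop~\evalexpr{c}{m}~f)~\kelse~\unitop~m$, while the target loop unfolds nondeterministically an unbounded number of times. The plan is to prove the biconditional by induction on the unfolding depth $k$: define $w_0(m) = \unitop~m$ when $\evalexpr{e}{m}$ is false (and the empty sub-distribution otherwise), and $w_{k+1} = F(w_k)$; show by inner induction on $k$ that $w_k(m)(m') \neq 0$ iff there exist $\priv$ and an execution trace of the target loop that performs at most $k$ iterations and reaches $\extmemfull{m'}{\priv}{}$. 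The forward direction uses the IH on the body via the \ruleref{T-Seq}-style argument above; the reverse direction uses the fact that any terminating execution of the target loop takes a finite number of iterations $k$, and therefore is captured by some $w_k$. The limit $w^*(m)(m') > 0$ then coincides with $\exists k.~w_k(m)(m') > 0$ by continuity of the sub-distribution semantics, which matches $\exists \priv.~\extmemfull{m'}{\priv}{} \in \evalexpr{c'}{\extmemfull{m}{0}{0}}$ in the target. The main subtlety, and the place where I expect to spend the most care, is handling the interaction between the countable limit in the source (where probabilities are summed over all terminating traces) and the existential over nondeterministic traces in the target; keeping the argument purely about support-nonemptiness (rather than numeric probabilities) is what makes the equivalence clean.
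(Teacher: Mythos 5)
Your proposal is correct and follows the same route the paper indicates: the paper's entire proof of this lemma is the single line ``By structural induction on $c$,'' and your case analysis is exactly the fleshing-out of that induction, with the right key observation (transformed commands only augment the source statement with writes to $\vpriv$ and ghost $\distance{x}$ variables, so the source-variable projection of the reachable target memories coincides with the support of the source distribution). Your treatment of the sequencing and loop cases — in particular the ``shift'' property for $\vpriv$ (the target only ever accumulates cost, never reads it) and the unfolding-depth induction against the Kleene chain for the fixpoint — is the standard way to make the structural induction go through, and is consistent with what the paper leaves implicit.
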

\begin{proof}
By structural induction on $c$.
\end{proof}

For a pair of initial and final memories $m_0$ and $m'$ when executing
the original program, we identify a set of possible $\vpriv$ values, so that in the
corresponding executions of $c'$, the initial and final memories are extensions
of $m$ and $m'$ respectively:
\begin{definition} Given a target program $c'$, an initial memory $m_0$ and a
final memory $m'$, the consistent costs of executing $c'$ w.r.t.  $m_0$ and
$m'$, written $\restrict{c'}{m_0}{m'}$, is defined as follows
\[\restrict{c'}{m_0}{m'} \defn \{\priv' \mid \extmemfull{m}{\priv'}{\delta}\in
\evalexpr{c'}{\extmemfull{m_0}{0}{0}} ∧ m=m'\}\]
where $m=m'$ iff $∀x\in\dom{(m)}.~m'(x)=m(x)$
\end{definition}

Since $(\restrict{c'}{m_1}{m})$ by definition is a set of values of $\vpriv$,
we write $\max(\restrict{c'}{m_1}{m})$ for the maximum cost. The next lemma
enables precise reasoning of privacy cost w.r.t. a pair of initial and final
memories when $Γ$ is injective:

\begin{lemma}[Point-Wise soundness]
\label{lem:pointwise}
\begin{multline*}
∀c,c',m_1,m_2,m,Γ.~\proves Γ ∧ Γ\proves c \transform c' ∧ \store_1~Γ~\store_2,\text{ we have } \\
\evalexpr{c}{\store_1}(m)≤\exp(\max(\restrict{c'}{m_1}{m}))\evalexpr{c}{\store_2}(Γ(m))
\end{multline*}
\end{lemma}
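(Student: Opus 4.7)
The plan is to proceed by structural induction on the typing derivation $\Gamma\vdash c\transform c'$, supported by a subordinate pointwise lemma for expressions: whenever $\Gamma\vdash e:\basety_\dexpr$ and $m_1~\Gamma~m_2$, we have $\evalexpr{e}{m_1}+\evalexpr{\dexpr}{m_1}=\evalexpr{e}{m_2}$, and when $\basety=\bool$ we additionally get $\evalexpr{e}{m_1}=\evalexpr{e}{m_2}$. This follows by a straightforward induction over the expression typing rules; the only nontrivial case is \ruleref{T-ODot}, where the attached semantic side-condition is exactly what we need for booleans.

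First I would dispatch the easy cases. For \ruleref{T-Skip}, both semantics are Dirac on the initial memory, which are $\Gamma$-related by hypothesis, and $\restrict{c'}{m_1}{m}=\{0\}$. For \ruleref{T-Return} and \ruleref{T-Asgn}, no sampling occurs, the updated memories remain $\Gamma$-related by the expression lemma, and $\vpriv$ is not touched. \ruleref{T-AsgnStar} is analogous, the essential observation being that the target explicitly sets $\distance{x}:=\dexpr$, which is precisely the amount needed to re-establish $m_1'~\Gamma~m_2'$ after the source update. The main base case is \ruleref{T-Laplace}: expanding $\bindop$ against the Laplace density $p_r(v)\propto\exp(-|v|/r)$, the source probability of reaching $m$ is $p_r(m(\eta))$, while the corresponding target $\vpriv$-increment under $\havoc{\eta}$ is exactly $|\evalexpr{\dexpr}{m_1}|/r$, and the $\Gamma$-related memory has $\eta$-value $m(\eta)+\evalexpr{\dexpr}{m_1}$. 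The inequality then reduces to the standard Laplace ratio bound $p_r(v)\leq\exp(|\delta|/r)\,p_r(v+\delta)$.

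For the inductive cases, \ruleref{T-Seq} unfolds via $\bindop$ into a sum over intermediate memories $m'$; applying the IH to $c_1$ and $c_2$ and using injectivity of $\Gamma$ (guaranteed by $\vdash\Gamma$) to reindex $m'\mapsto\Gamma(m')$ without collapse, the target sum dominates when the consistent cost of the composed program is taken to be the sum of the pieces — which follows from Faithfulness, since every target execution of $c_1';c_2'$ factors uniquely through an intermediate extended memory. \ruleref{T-If} uses the expression lemma at $\bool$ to conclude that both executions take the same branch, reducing to the IH on the taken branch. For \ruleref{T-While}, I would perform fixpoint induction on the Kleene iterates $w_n$ of $w^*$: each $w_n$ is definable as a finite nesting of sequences and conditionals, so the property lifts from those cases, and monotone continuity of the semantics preserves it in the limit.

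The main obstacle I anticipate is the sequence case, where two things must be coordinated. First, one must show that the reindexing inside $\sum_{m'}$ is lossless — this is where well-formedness of $\Gamma$ is essential, since a non-injective $\Gamma$ would over-count the right-hand side and break the bound. Second, one must verify that $\max(\restrict{c_1';c_2'}{m_1}{m})$ decomposes additively across the intermediate memory witness used on each side; here Faithfulness is what lets us match source intermediates with target ones. The while case inherits both difficulties on top of the fixpoint argument, and the Laplace case, while conceptually a one-line density calculation, is the only place where concrete probabilistic reasoning enters — all other cases are purely structural.
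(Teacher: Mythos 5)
Your proposal mirrors the paper's proof almost case for case: structural induction on $c$ (equivalently, the typing derivation), an auxiliary expression lemma giving $\evalexpr{e}{m_1}+\evalexpr{\dexpr}{m_1}=\evalexpr{e}{m_2}$ with Boolean equality as a corollary via \ruleref{T-ODot}, the Laplace ratio bound in the sampling case, a reindexing-by-$\Gamma$ argument in the sequence case that genuinely needs injectivity of $\Gamma$, Boolean agreement forcing the same branch in the conditional case, and a loop argument (you use Kleene iterates, the paper counts iterations under $m_1$; these are equivalent once you observe each finite iterate is itself built from sequencing and conditionals). Your observation that the intermediate-memory witness lets the consistent cost decompose additively across $c_1';c_2'$ via Faithfulness is exactly the paper's move.

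There is, however, one concrete gap in the \ruleref{T-AsgnStar} case. You justify re-establishing $m_1'~\Gamma~m_2'$ by appealing to the fact that \emph{the target program} sets $\distance{x}:=\dexpr$. But the $\Gamma$-relation is a relation between two \emph{source} memories, and the source command $x:=e$ does not touch $\distance{x}$; the target's instrumented assignment is irrelevant to what $\evalexpr{c}{m_1}$ and $\evalexpr{c}{m_2}$ produce. Under the source semantics as given in Figure~\ref{fig:semantics}, after $x:=e$ we have $m_1'(\distance{x})=m_1(\distance{x})$, which is generally \emph{not} equal to $\evalexpr{\dexpr}{m_1}$, so the $\Gamma$-relation is simply broken at $\distance{x}$. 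The paper closes this by first \emph{extending the source semantics} so that the invisible component $\distance{x}$ is implicitly updated on every assignment to a star-typed $x$ (and arguing this extension is conservative with respect to the visible variables); only then does the relation carry through, and the target's $\distance{x}:=\dexpr$ is what \emph{mirrors} this extension, not what causes it. Without that extension your AsgnStar argument does not go through. A secondary and smaller point: in \ruleref{T-Asgn} you should also confirm the relation still holds for \emph{other} variables $y\neq x$ whose declared distance might mention $x$; this is exactly why the paper restricts normal variables to have distances over immutable variables and restricts random variables to single-use, and the proof must invoke that restriction explicitly rather than leaning on the expression lemma alone.
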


The full proof of Lemma~\ref{lem:pointwise} is available in 
\ifreport
the appendix.
\else
the full version of this paper~\cite{report}. 
\fi
We comment that this point-wise result enables precise reasoning of privacy
cost where the composition theorem falls short. Consider the transformed Sparse
Vector method in Figure~\ref{fig:translated}. This point-wise result allows
various cost bounds to be provided for various memories: $\vpriv$ increments by
$2\priv/N$ when the branch condition is true, but it remains the same otherwise.  On the
other hand, methods based on the composition theorem
(e.g.,~\cite{Fuzz,DFuzz,Barthe12,Barthe14}) have to (conservatively) provide an
unique cost bound for all possible executions, rendering a cost of $2\priv/N$. 

The point-wise soundness lemma provides a precise privacy bound per initial and
final memory. However, differential privacy by definition
(Definition~\ref{def:diffpriv}) bounds the worst-case cost. To close the gap,
we define the worst-case cost of the transformed program.

\begin{definition}
For any program $c$ in the target language, we say $c$'s execution cost is
\emph{bounded} by some constants $\priv$, written
$c^{\preceq \priv}$, iff for \emph{any} $\extmemfull{m}{0}{0}$,
\[\extmemfull{m'}{\priv'}{\delta'}\in \evalexpr{c}{\extmemfull{m}{0}{0}} ⇒ \priv'≤\priv\]
\end{definition}

Note that this safety property can be verified by an external mechanism such as
Hoare logic and model checking. Off-the-shelf tools can be used to verify that
$c^{\preceq \priv}$ holds for some $\priv$. For example, we have formally
proved that the transformed program in Figure~\ref{fig:sparsevector} satisfies
a postcondition $\vpriv≤\priv$ by providing one line of annotation (the grey
line in Figure~\ref{fig:sparsevector}) using the Dafny tool~\cite{Dafny}. 

\begin{theorem}[Soundness]
\label{thm:soundness}
\begin{multline*}
∀c,c',m_1,m_2,Γ,\priv.~\proves Γ ∧ Γ\proves c \transform c' ∧ \store_1~Γ~\store_2∧c'^{\preceq \priv},\text{ we have } \\
\gammadis(\evalexpr{c}{m_1},\evalexpr{c}{m_2})≤0
\end{multline*}
\end{theorem}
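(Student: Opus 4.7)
The plan is to reduce the worst-case statement about $\gammadis$ to a pointwise comparison of masses, invoke the point-wise soundness lemma on each individual final memory, and then use injectivity of $\Gamma$ to sum the pointwise bounds back into a bound on sets. Unfolding the definition of $\gammadis$, it suffices to show that for every $S \subseteq \Store$,
\[
\evalexpr{c}{m_1}(S) - \exp(\priv)\,\evalexpr{c}{m_2}(\Gamma~S) \;\leq\; 0.
\]
Writing $\evalexpr{c}{m_1}(S) = \sum_{m \in S} \evalexpr{c}{m_1}(m)$, the key is to bound each summand.

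First, I would fix an arbitrary $S$ and an arbitrary $m \in S$. By Lemma~\ref{lem:pointwise} (point-wise soundness), applied to $c, c', m_1, m_2, m, \Gamma$, we obtain
\[
\evalexpr{c}{m_1}(m) \;\leq\; \exp\!\bigl(\max(\restrict{c'}{m_1}{m})\bigr)\,\evalexpr{c}{m_2}(\Gamma(m)).
\]
Next, I would show that $\max(\restrict{c'}{m_1}{m}) \leq \priv$ using the hypothesis $c'^{\preceq \priv}$. Indeed, every $\priv' \in \restrict{c'}{m_1}{m}$ arises as the $\vpriv$-component of some final memory of $c'$ starting from $\extmemfull{m_1}{0}{0}$, and the bounded-cost hypothesis forces all such $\priv'$ to be at most $\priv$. (If $\restrict{c'}{m_1}{m}$ is empty, then by the Faithfulness lemma $\evalexpr{c}{m_1}(m) = 0$, and the pointwise bound holds trivially with right-hand side $\geq 0$.) Substituting, we get $\evalexpr{c}{m_1}(m) \leq \exp(\priv)\,\evalexpr{c}{m_2}(\Gamma(m))$.

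Finally, summing over $m \in S$,
\[
\evalexpr{c}{m_1}(S) \;\leq\; \exp(\priv) \sum_{m \in S} \evalexpr{c}{m_2}(\Gamma(m)).
\]
Here the use of the well-formedness assumption $\proves \Gamma$ is critical: since $\Gamma$ is an injective function on memories, the map $m \mapsto \Gamma(m)$ enumerates the elements of $\Gamma~S$ without repetition, so $\sum_{m \in S} \evalexpr{c}{m_2}(\Gamma(m)) = \sum_{m' \in \Gamma~S} \evalexpr{c}{m_2}(m') = \evalexpr{c}{m_2}(\Gamma~S)$. Rearranging gives the desired inequality, and since $S$ was arbitrary, taking the maximum yields $\gammadis(\evalexpr{c}{m_1}, \evalexpr{c}{m_2}) \leq 0$.

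The only subtle point — and the part I would expect to require the most care — is the injectivity step: without $\proves \Gamma$, distinct memories $m, m' \in S$ could collapse under $\Gamma$ and the pointwise bounds would be double-counted on the right, breaking the sum manipulation. The bulk of the genuine work is already encapsulated in Lemma~\ref{lem:pointwise} (which proceeds by structural induction on $c$ and handles the Laplace case via the standard ratio bound on Laplace densities); the theorem itself is essentially a clean packaging of that lemma using $c'^{\preceq \priv}$ as a uniform cap and injectivity as the bookkeeping device.
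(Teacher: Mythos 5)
Your proof is correct and follows essentially the same route as the paper's: apply the point-wise soundness lemma together with the cost bound $c'^{\preceq\priv}$ to get a uniform pointwise inequality, then sum over $S$ using injectivity of $\Gamma$ to rewrite $\sum_{m\in S}\evalexpr{c}{m_2}(\Gamma(m))$ as $\evalexpr{c}{m_2}(\Gamma~S)$. Your explicit treatment of the case $\restrict{c'}{m_1}{m}=\emptyset$ (via the Faithfulness lemma) is a small but welcome bit of rigor that the paper's proof leaves implicit.
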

\begin{proof}
By definition, $(\max (\restrict{c}{m_1}{m}))≤\priv$ for all $m$,
$m_1$. Hence by Lemma~\ref{lem:pointwise}, $∀m.~
\evalexpr{c}{\store_1}(m)≤\exp(\priv)\evalexpr{c}{\store_2}(Γ(m))$. Hence,
\begin{align*}
   &\max_{S⊆\Store}(\evalexpr{c}{m_1}(S)-\exp(\priv)\evalexpr{c}{m_2}(Γ(S))) \\
 = &\max_{S⊆\Store}\sum_{m\in S} (\evalexpr{c}{m_1}(m)-\exp(\priv)\evalexpr{c}{m_2}(Γ(m)))
 ≤ 0
\end{align*}
We note that the equality in the proof above holds due to the injective
assumption $(\proves Γ)$, which allows us to derive the set-based privacy from
the point-wise privacy (Lemma~\ref{lem:pointwise}).
\end{proof}

We now connect the soundness theorem to differential privacy:

\begin{theorem}[Privacy]
\label{thm:privacy}
\begin{multline*}
∀Γ,c,c',x,\priv.~\proves Γ ∧ Γ\proves (c;\outcmd~e) \transform (c';\outcmd~e) \text{ then }\\
 c'^{\preceq \priv}
 ⇒ c \text{ is } \priv\text{-differentially private}
\end{multline*}
\end{theorem}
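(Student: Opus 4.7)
The plan is to reduce this theorem to Theorem~\ref{thm:soundness} using the distance-zero constraint that rule \ruleref{T-Return} forces on the returned expression. First I would unfold the denotational semantics: by Figure~\ref{fig:semantics}, for any output event $E$,
\[
P(\evalexpr{c;\outcmd~e}{m}\in E) \;=\; \evalexpr{c}{m}(S_E),\qquad S_E \defn \{m' \mid \evalexpr{e}{m'}\in E\},
\]
so the privacy inequality on outputs is equivalent to a privacy inequality on the preimage set $S_E$ of final memories.

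Second, I would extract the consequence of \ruleref{T-Return}. Since $Γ \proves (c;\outcmd~e) \transform (c';\outcmd~e)$, the last step of the derivation must apply \ruleref{T-Return} and therefore forces $Γ\proves e : \tyreal_0$ or $Γ\proves e : \bool_0$, i.e., distance $0$. I would prove, by induction on the expression typing derivation, the fundamental semantic lemma that $Γ\proves e : \basety_{\dexpr}$ implies $\evalexpr{e}{Γ(m)} = \evalexpr{e}{m} + \evalexpr{\dexpr}{m}$ (with the usual adaptation to Booleans and lists). Instantiated with $\dexpr=0$, this gives $\evalexpr{e}{Γ(m')} = \evalexpr{e}{m'}$ for every $m'$, so $Γ(S_E) \subseteq S_E$.

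Third, I would apply soundness. The sub-derivation $Γ \proves c \transform c'$ holds, and $c'^{\preceq\priv}$ is inherited from $(c';\outcmd~e)^{\preceq\priv}$ because the transformation of $\outcmd~e$ does not touch $\vpriv$. Theorem~\ref{thm:soundness} then yields $\evalexpr{c}{m_1}(S) \leq \exp(\priv)\,\evalexpr{c}{m_2}(Γ(S))$ for every $S$. Specializing to $S := S_E$ and using monotonicity of $\evalexpr{c}{m_2}(\cdot)$ together with $Γ(S_E) \subseteq S_E$ from the previous step gives
\[
P(\evalexpr{c;\outcmd~e}{m_1}\in E) \;\leq\; \exp(\priv)\,P(\evalexpr{c;\outcmd~e}{m_2}\in E),
\]
which is exactly $\priv$-differential privacy with the $Γ$-relation playing the role of the adjacency relation $\pre$ of Definition~\ref{def:diffpriv}.

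The main obstacle is the fundamental lemma in the second step: turning the syntactic annotation ``distance $\dexpr$'' into the semantic identity $\evalexpr{e}{Γ(m)} - \evalexpr{e}{m} = \evalexpr{\dexpr}{m}$ is routine for \ruleref{T-OPlus} and \ruleref{T-OTimes}, but requires care for \ruleref{T-ODot} (where the SMT-discharged proof obligation $\pre \Rightarrow (e_1\odot e_2 \Leftrightarrow (e_1{+}\dexpr_1)\odot (e_2{+}\dexpr_2))$ is exactly what lets equality of Booleans propagate through comparisons) and for \ruleref{T-VarStar} (where the invisible ghost component $\distance{x}$ supplies the distance). This lemma is essentially the invariant already maintained inside the proof of Lemma~\ref{lem:pointwise}, so I would appeal to it there rather than reprove it.
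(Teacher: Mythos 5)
Your proposal is correct and follows essentially the same route as the paper: unfold $\evalexpr{c;\outcmd{e}}{m}(E)$ as $\evalexpr{c}{m}(S_E)$, use \ruleref{T-Return} plus the expression lemma (Lemma~\ref{lem:expr}) to get $Γ(S_E)\subseteq S_E$, and then chain Theorem~\ref{thm:soundness} with monotonicity of the sub-measure. The paper phrases the inclusion step element-wise and notes that it is a genuine inclusion (not equality) because $Γ$ need not be surjective, but that is exactly the content of your $Γ(S_E)\subseteq S_E$ observation.
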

\begin{proof}
Proof is available in 
\ifreport
the appendix.
\else
the full version of this paper~\cite{report}. 
\fi
\end{proof}

\section{Differential-Privacy Proof Inference}
\label{sec:fulllang}

We have so far presented an explicitly typed language \langzero. However,
writing down types (especially those dependent types) for variables is still a
non-trivial task. Moreover, when multiple proofs exist, writing down types
accompanied with the minimum privacy cost is even more challenging. We extend
\langzero to automatically infer a proof and even search for the optimal one.

\subsection{Type Inference}
\label{sec:challenges}

\begin{figure}
$
\begin{array}{l@{\ }l@{\ }c@{\ }l}
\text{Distance Vars} & α,β,γ \in \DVars \\
\text{Distances} &  \dexpr &::=\; \dots \mid α
\end{array}
$
\caption{\lang: language syntax extension.}
\label{fig:fullsyntax}
\end{figure}

Since each type has two orthogonal components (base type and distance),
inference is needed for both. The former is mostly standard (e.g., for
Hindley/Milner system~\cite{Wand-typeinference,aiken-typeinclusion,zm14}),
hence omitted in this paper. 

Next, we assume all base types are available, and focus on the inference of the
distance counterpart. For brevity, we write $Γ(x)=\dexpr$ instead of
$∃\basety.~Γ(x)=\basety_\dexpr$. We use $\cod{DefVars}$ to represent the set of
variables whose distances are given by the programmer. 

To enable type inference, we extend \langzero with distance variables such as
$α,β,γ$ (shown in Figure~\ref{fig:fullsyntax}). Initially, the typing
environment associates each variable in $\cod{DefVars}$ with its annotated
distance. It associates each other variable with a distinguished distance
variable to be inferred.

Following the idea of modeling type inference as constraint solving
(e.g.,~\cite{Wand-typeinference,aiken-typeinclusion,haack:slicing}), it is
straightforward to interpret the typing rules in Figure~\ref{fig:typing} as a
(naive) inference algorithm. To see how, consider two assignments $(x:=0;
y:=x)$, where $Γ(x)\!=\!α,Γ(y)\!=\!β$. With distance variables, the typing
rules now collect constraints (instead of checking their validity) during type
checking. For example, two constraints are collected for those two assignments:
$α=0$ and $β=α$. Hence, inferring types is equivalent to finding
a solution for those two constraints (i.e., the satisfiability problem of
$∃α,β.~α=0∧β=α$). It is easy to check that $α=0∧β=0$ is a solution. Hence, the
inferred distances are $Γ(x)=0,Γ(y)=0$. However, this naive inference algorithm
falls short in face of dependent types. Next, we first explore the main
challenges in inferring dependent types, and then propose our inference
algorithm.

\paragraph{Inferring star types}
Consider the example in Figure~\ref{fig:partialsum}. If we follow the naive
inference algorithm above, two constraints are generated from lines 2 and 4:
$α\!=\!0$ and $α=α+\distance{q}[i]$, where $α=Γ(\cod{sum})$. These constraints are
unsatisfiable, since the value of $\distance{q}[i]$ is an arbitrary value
between $-1$ and $1$.
Nevertheless, the powerful type system of \langzero still allows formal
verification of this example by assigning $\cod{sum}$ to the star type, meaning
that its distance is dynamically tracked.
 
We observe that starting from the initial typing environment, we can refine it
by processing each assignment $x := e$ in the following way. We first
\emph{synthesize} the type of $e$ from its subexpressions, in the same fashion
as the original typing rules in Figure~\ref{fig:typing}. Then, if $x\in
\cod{DefVars}$ (i.e., given by the programmer), there is nothing to be refined.
Otherwise, we can \emph{refine} the typing environment by updating the type of
$x$ to a more precise one:
\[
\refine(Γ,x,\dexpr)\defn \begin{cases}
                       \subst{Γ}{α}{\dexpr} &\mbox{if } Γ(x) = α \in \DVars \\
                       Γ &\mbox{if } Γ(x) \not\in\DVars ∧ (Γ(x)=\dexpr)\\
                       Γ[x\mapsto *] &\mbox{otherwise} \\
                       \end{cases}
\]
Here, the auxiliary function $\refine$ takes an initial environment $Γ$, a
variable $x$ and a distance expression $\dexpr$. 
This function replaces all occurrences of $α$ in $Γ$ to $\dexpr$ when $Γ(x)$ is a
variable to be inferred ($α\in \DVars$). Otherwise, it statically
checks whether the old and
new distance expressions are equivalent. When the equivalence cannot be
determined at static time, it assigns the $*$ type to $x$.

Our inference algorithm refines the typing environment as it proceeds.
Consider Figure~\ref{fig:partialsum} again. At line 4, $\cod{sum}$'s distance
is refined to 0. Then at line 6, its distance is refined to $*$, since we
cannot statically check that $0=0+\distance{q}[i]$ is valid.

\paragraph{Inferring dependency on program state} 
Consider Figure~\ref{fig:sparsevector} where only the type of $η_2$ is to be
inferred.  The naive inference algorithm will generate one constraint for the
branch condition in line 6:
\vspace{-2ex}
\begin{multline*}
∀q_i,\distance{q_i},η_2,\tT.~(-1≤\distance{q_i}≤1)⇒\\
(q_i+η_2≥\tT ⇔ (q_i+η_2+\distance{q_i}+α≥\tT+1)
\end{multline*}
which is unsatisfiable, since there is no single value $α$ that can hide the difference of
$q_i$ in both directions. We need a more precise type for $η_2$ (as provided in
Figure~\ref{fig:sparsevector}) so that the ``if'' and ``else'' branches can be
aligned in different ways.
\if0
Here are the constraints checked by z3:
(declare-const alpha Int) ; eta_1
(declare-const beta Int)  ; eta_2 
(assert (= alpha 1))
(assert (forall ((t Int) (q Int) (eta Int) (dq Int) (delta Int))
  (=> (and (<= dq 1) (>= dq -1))
    (and
      (= (>= (+ q eta) t) 
         (>= (+ (+ q dq) (+ eta beta)) (+ t alpha)))
    )
  )))
(check-sat)
(get-model)
\fi

To infer dependent types, our inference algorithm propagates context
information to subexpressions. In
particular, we observe that only rule~\ruleref{T-ODot} generates a constraint
that may benefit from dependency on program states. Hence, our inference
algorithm propagates the comparison result to its subexpressions, and refine
subexpressions (e.g., $η_2$) for the needed dependency.

\paragraph{Inference algorithm}
We now present our inference algorithm, which is still based on the typing
rules in Figure~\ref{fig:typing}.  However, to tackle the challenges above, we
run a \emph{refinement algorithm} before type inference. The algorithm is shown
in Figure~\ref{fig:refinement}.

\begin{figure}
\framebox{Refinement rules for expressions}
\begin{mathpar}
\inferrule*[right=(R-Num)]{ }{ Γ;\predicate  \infers \real : Γ}
\and
\inferrule*[right=(R-Boolean)]{ b\in \{\true, \false\}}{ Γ;\predicate  \infers b : Γ}
\and
\inferrule*[right=(R-Var)]{ }
              { Γ; \predicate  \infers x : Γ}
\and
\inferrule*[right=(R-Rand)]{\predicate= ∅}
                 { Γ;\predicate \infers η : Γ}
\and
\inferrule*[right=(R-Rand-Refine)]{\predicate \not= ∅ \\ α_t,α_f \text{ fresh variables}}
                 { Γ;\predicate \infers η : \refine(Γ, η, \predicate?α_t:α_f)}
\and
\inferrule*[right=(R-Ops)]{Γ; \predicate \infers e_1: Γ_1 \quad 
                 Γ_1; \predicate \infers e_2 : Γ_2 \quad \op\in \oplus∪\otimes}
                       { Γ; \predicate \infers e_1~\op~e_2 : Γ_2}
\and
\inferrule*[right=(R-ODot)]{\inferrule*{}{
            Γ;   \predicate ∧ (e_1\odot e_2) \infers e_1\!:\! Γ_1 \quad
            Γ_1; \predicate ∧ (e_1\odot e_2) \infers e_2\!:\! Γ_2}}
                             { Γ; \predicate \infers e_1\odot e_2 : Γ_2}
\\
\and
\inferrule*[right=(R-Cons)]{Γ;\predicate \infers e_1: Γ_1 \quad Γ_1;\predicate \infers e_2:Γ_2}
               {Γ;\predicate \infers e_1::e_2 : Γ_2}
\and
\inferrule*[right=(R-Neg)]{Γ;\predicate \infers e: Γ'}{ Γ;\predicate \infers \neg e\!:\! Γ'}
\quad
\inferrule*[right=(R-Idx)]{Γ;\predicate \infers e_1\!:\!Γ_1\quad Γ_1;\predicate\infers e_2\!:\!Γ_2}{Γ;\predicate \infers e_1[e_2]:Γ_2}
\end{mathpar}
\framebox{Refinement rules for commands}
\begin{mathpar}
\inferrule*[right=(R-Skip)]{ }{Γ \infers \skipcmd : Γ}
\and
\inferrule*[right=(R-Return)]{ }
                             {Γ \infers \outcmd{e}: Γ}
\and
\inferrule*[right=(R-Asgn-Ref)]{ x\not\in \cod{DefVars} \and Γ,∅ \infers e : Γ' \and Γ' \proves e: \dexpr }
                           {Γ \infers x := e :\refine(Γ',x,\dexpr)}
\and
\inferrule*[right=(R-Asgn)]{ x\in \cod{DefVars} \quad Γ,∅ \infers e : Γ'}
                           {Γ \infers x := e : Γ'}
\and
\inferrule*[right=(R-Seq)]{Γ \infers c_1:Γ' \quad Γ' \infers c_2:Γ''}{Γ \infers c_1;c_2:Γ''}
\\
\inferrule*[right=(R-If)]{Γ;∅\infers e : Γ_1 \quad Γ_1 \infers c_1:Γ_2 \quad Γ_2 \infers c_2:Γ_3}
{ Γ \infers \ifcmd{e}{c_1}{c_2}:Γ_3}
\\
\inferrule*[right=(R-While)]{Γ;∅\infers e:Γ_1 \and Γ_1≤Γ_2 \and Γ_2 \infers c:Γ_2}{Γ \infers \whilecmd{e}{c} :Γ_2}
\end{mathpar}
\framebox{Refinement for random assignments}
\begin{mathpar}
\inferrule*[right=(R-Laplace)]{ }
                           {Γ \infers η := \lapm~r : Γ}
\end{mathpar}
\caption{The refinement algorithm.}
\label{fig:refinement}
\end{figure}

For expressions, the refinement algorithm propagates context information
$\predicate$ to subexpressions. Hence, each rule for expression has the form of
$Γ,\predicate \infers e : Γ'$, where $\predicate$ is a predicate that may appear in 
a dependent type, $Γ$ is the typing environment to be refined, and $Γ'$ is the
refined environment. The context information $\predicate$ is used to refine
distance of a random variable $η$ in rule~\ruleref{R-Rand-Refine}. Note that
the refinement is not needed for a normal variable $x$ (rule~\ruleref{R-Var}).
Intuitively, the reason is that the ``shape'' of $x$ is either provided or has
been refined when $x$ is initialized. However, this is not true for a random
variable: $η$ can have any distance expression according to
rule~\ruleref{T-Laplace}.

The refinement rules for commands have the form of $Γ\infers c : Γ'$. As we
described informally above, rule~\ruleref{R-Asgn-Ref} refines the distance of
$x$ using the $\refine$ function when its distance is not given.
The rule~\ruleref{R-While} assumes that a fixed point exists. Based on the
definition of the $\refine$ function, a fixed point can be computed
as follows. We define $≤$ as the lifted relation based on a point-wise lattice
(for each variable) where: $∀α,β\in\DVars.~α≤β∧β≤α$ and $α≤\dexpr≤*$ if $\dexpr$ is not
a distance variable. We can compute a fixed point by $Γ_1=Γ^0\proves c:Γ^1,
Γ^1\proves c:Γ^2,\cdots$ until $Γ^i\proves c:Γ^i$ for some $i$. Based on the
definition of the $\refine$ function, it is easy to check that $Γ^i≤Γ^{i+1}$
and the computation terminates since whenever $Γ^i\not=Γ^{i+1}$, either the
number of distance variables is reduced by one, or one more variable has a star
type.

\paragraph{Example} We consider type inference for our running example in
Figure~\ref{fig:sparsevector} where all local variables are to be inferred. We
first run the refinement algorithm. The first refinement happens at line 2,
where the distance of $\tT$ is refined to $α$, the distance variable of $η_1$.
At line 3, $\cod{c_1}$, $\cod{c_2}$ and $\cod{i}$ are refined to distance $0$.
In the loop body, $η_2$ is refined to $q[i]+η_2≥\tT?β:γ$ at line 6, using rule~\ruleref{R-Rand-Refine}. At line 8,
$\refine(Γ,\cod{c1},0)$ returns $Γ$ since $0=0$ is always true. Similar for the
``else'' branch and line 12. Hence, the environment after line 12 is already a fixed
point for the loop body. Hence, the typing environment after refinement is:
$Γ(\cod{c1})=Γ(\cod{c2})=Γ(i)=0$,  $Γ(\tT)=Γ(η_1)=α$ and
$Γ(η_2)=(q[i]+η_2≥\tT)?β:γ$.

\paragraph{Type checking with distance variables}
With type variables in the refined environment $Γ$, the type system collects
constraints during type checking, and tries to solve the collected constraints
where the type variables are existentially qualified. For example, with type
refinement, type checking the partial sum example in
Figure~\ref{fig:partialsum} yields a unique solution, which is identical to the
type annotation in the figure. In general, collected constraint may have
multiple solutions. For example, type checking the Sparse Vector method
generates only one (nontrivial) constraint from the rule~\ruleref{T-ODot}:
\vspace{-2ex}
\begin{multline*}
\big(∀q_i,\distance{q_i},η_2,\tT\in \mathbb{R}.~(-1≤\distance{q_i}≤1)⇒ \\
q_i+η_2≥\tT ⇔
q_i+\distance{q_i}+η_2+(q_i+η_2≥\tT?β:γ) ≥ \tT+α\big)
\end{multline*}

It is easy to check that the type annotation in Figure~\ref{fig:sparsevector}
(i.e., $α=1,β=2,γ=0$) is a solution of the constraint.  But in fact, other
solutions exist. For example, $α=0,β=2,γ=-2$ and $α=2,β=3,γ=0$ are both valid
solutions. The type system can either pick a solution, or defer the inference
by transforming the original program to a target program where type variables
are treated as unknown program inputs (as shown in
Figure~\ref{fig:translatedfinal}).

\subsection{Minimizing Privacy Cost}
\begin{figure}
\small
\algrule
\funsigfour{MSparseVec}{$T,N:\tyreal$; $q:\tylist~\tyreal$; \instrument{\distance{q}: \tylist~\tyreal};\\ $α,β,γ:\tyreal$}
       {$\cod{out} : \tylist~\tyreal$}{$∀i.~‐1≤(\distance{q}[i])≤1$}
\algrule
\begin{lstlisting}[frame=none]
$\instrument{\vpriv := 0;}$
$\instrument{\havoc{η_1};\vpriv := \vpriv+(α\priv/2);}$
$\instrument{\tT := T + η_1;}$
c1 := 0; c2 := 0; i := 0;
while (c1 $<$ N)
  $\annotation{\cod{Invariant}: \cod{c1}≤N ∧ \vpriv = \frac{α\priv}{2}+\cod{c1}\times \frac{β\priv}{2N}+\cod{c2}\times \frac{γ\priv}{2N}}$
  $\instrument{\havoc{η_2};\vpriv := \vpriv + (q[i]+η_2≥\tT?β:γ)\times \priv/4c;}$
  if ($q[i]+η_2≥\tT$) then
    out:= true::out;
    c1 := c1+1;
  else
    out:= false::out;
    c2 := c2+1;
  i := i+1;
\end{lstlisting}
\caption{The target program with unknown type variables. The instrumented statements are underlined.}
\label{fig:translatedfinal}
\end{figure}

With type variables captured explicitly in the transformed program, we can
verify that the postcondition $\vpriv =
\frac{α\priv}{2}+\frac{β\priv}{2}+\cod{c2}\times \frac{γ\priv}{2N}$ holds by
providing the loop invariant shown in grey.  Hence, combined with the
remaining unsolved constraints on those type variables, finding the optimal
proof is equivalent to the following MaxSMT problem, where $M$ is a large
number since $\cod{c2}$ is not bounded:
\begin{multline*}
\min (\frac{α}{2}+\frac{β}{2}+M\times γ) \text{ such that } \\
\big(∀q_i,\distance{q_i},η_2,\tT\in \mathbb{R}.~(-1≤\distance{q_i}≤1)⇒ \\
   q_i+η_2≥\tT ⇔ q_i+\distance{q_i}+η_2+(q_i+η_2≥\tT?β:γ) ≥ \tT+α\big)
\end{multline*}
Using a MaxSMT solver $μZ$~\cite{vZ-SCSS,vZ-TACAS}, we successfully find the
optimal solution for the type variables: $α=1,β=2,γ=0$. This is exactly the
randomness alignment used in its formal proof~\cite{diffpbook}. 

We note that the translation to the MaxSMT problem at this stage still requires
programmer efforts (e.g., identifying the cost bound involving type variables and
converting the cost bound to an equivalent formula suitable for a MaxSMT
solver).  However, this example clearly demonstrates the potential benefits of
explicitly calculating the privacy cost in the target language. 

\subsection{Proof Automation}

In general, a \lang-based proof consists four steps involving manual efforts:
\begin{inparaenum}[1)]
\item writing down the program specification (i.e., the function signature that
specifies private and non-private parameters and return values),
\item writing down the type annotations for local variables, 
\item verifying that the privacy cost in the transformed program is bounded by
either a known budget, or (MaxSMT only) a formula involving unsolved type
variables, and
\item (MaxSMT only) solving the MaxSMT problem of ``$\min$(upper bound formula)
such that constraints from step 2 are satisfiable''. 
\end{inparaenum}

As most verification tools, \lang requires a programmer to write down
specification (step 1). For step 2, we find that the inference algorithm in
Section~\ref{sec:challenges} is powerful enough to automatically infer the
types for the nontrivial algorithms considered in this paper\footnote{The only
exception is the algorithm in Section~\ref{sec:categorical}, since the
algorithm uses a uniformly distributed random source which is currently absent
in the inference algorithm.}. For step 3 and step 4, \lang relies on the
automation in existing verification tools.
We note that though \lang currently adds no automation in step 3 and 4,
separating relational reasoning from counting privacy cost and automating task
2 greatly simplifies those steps for all examples that we have seen so far. We
leave systematic research in automating the entire proof as future work.

\section{Case Studies}\label{sec:casestudy}
\subsection{Sparse Vector with Numerical Answers}
\label{sec:numericalsparse}

\begin{figure}
\small
\algrule
\funsigfour{NumSparseVector}{$T,N,\priv$\annotation{:\tyreal_0}; $q$\annotation{:\tylist~\tyreal_*}}{($\cod{out}:\annotation{\tylist~\tyreal_0}$)}{$∀i.~‐1≤(\distance{q}[i])≤1$}
\algrule
$\framebox{\ensuremath{\cod{c1,c2,i}:\tyreal_0;\tT, η_1:\tyreal_1;η_2:\tyreal_{q[i]+η_2≥\tT?2:0};η_3:\tyreal_{-\distance{q}[i]}}}$
\begin{lstlisting}[frame=none]
$η_1$ := $\lapm{(3/\priv)}$;
$\tT$ := $T + η_1;$
c1 := 0; c2 := 0; i := 0;
while (c1 < N)
  $η_2$:= $\lapm{(6N/\priv)}$;
  if ($q[i]+η_2≥\tT$) then 
    $η_3$:= $\lapm{(3N/\priv)}$;
    out:= ($q[i]$+$η_3$)::out;
    c1 := c1 + 1;
  else
    out:= 0::out;
    c2 := c2 + 1;
  i := i+1;
\end{lstlisting}
The transformed program, where underlined commands are added by the type
system. Only one annotation (loop invariant) is needed from the programmer to
verify the postcondition $\vpriv≤\priv$: $\cod{c1}≤N ∧ \vpriv =
\frac{\priv}{3}+\cod{c1}\times \frac{2\priv}{3N}$.

\begin{lstlisting}[frame=none]
$\instrument{\vpriv := 0;}$
$\instrument{\havoc{η_1}; \vpriv := \vpriv + \priv/3;}$
$\tT$ := $T+η_1$;
c1 := 0; c2 := 0; i := 0;
while (c1 < N)
  $\instrument{\havoc{η_2}; \vpriv := \vpriv + (q[i]+η_2≥\tT?2:0)\times \priv/6N;}$
  if $(q[i]+η_2≥\tT)$ then
    $\instrument{\havoc{η_3}; \vpriv := \vpriv + |\distance{q}[i]|\times \priv/3N;}$
    out:= ($q[i]$+$η_3$)::out;
    c1 := c1 + 1;
  else
    out:= 0::out;
    c2 := c2 + 1;
  i := i+1
\end{lstlisting}
\caption{The Numerical Sparse Vector method.}
\label{fig:sparsevector2}
\end{figure}

We first study a numerical variant of the Sparse Vector method. The previous
version (Figure \ref{fig:sparsevector}), produces only two types of outputs for
each query: $\true$, meaning the query answer is probably above the threshold;
and $\false$, meaning that it is probably below. The numerical variant, shown
in Figure~\ref{fig:sparsevector2}, replaces the output $\true$ with a noisy
query answer. It does this by drawing fresh Laplace noise and adding it to the
query (Line 8). 

\paragraph{Verification using \lang} \lang can easily verify this numerical
variant from scratch, in a very similar way as verifying the Sparse Vector method.
However, here we focus on another interesting scenario of using \lang: the
programmer (or algorithm designer) has already verified the Sparse Vector
method using \lang, and she is now exploring its variations. This is a common
scenario for algorithm designers. We show that since \lang automatically fills
in most proof details, exploring variations of an algorithm requires little
effort.

In particular, we assume the programmer has already obtained the (optimal)
types for all local variables except $η_3$, and the loop invariant shown in
Figure~\ref{fig:translated} from the verification of the Sparse Vector method.
Hence, the type inference engine only needs to infer a type for $η_3$, which is
trivially solved to be $\tyreal_{-\distance{q}[i]}$. Moreover, \lang transforms
the original program to the one on the bottom of
Figure~\ref{fig:sparsevector2}. To finish the proof, according to
Theorem~\ref{thm:privacy}, it is sufficient to verify the postcondition that
$\vpriv≤\priv$. In fact, only one annotation (shown in
Figure~\ref{fig:sparsevector2}) that is very close to the one in
Figure~\ref{fig:translated} is needed to finish the proof. Hence, we just
proved the numerical Sparse Vector variant for (almost) free using \lang.

\paragraph{Incorrect variants}
The numerical variant is also historically interesting since it fixes a bug in
a very influential set of lecture notes \cite{aaronnotes}; these lecture notes
inadverantly re-used the same noise used for the ``if'' test (Line 7) instead
of drawing new noise when outputting the noisy query answer. In other words,
Lines 5-8 in Figure~\ref{fig:sparsevector} are replaced with:
\begin{lstlisting}[frame=none,numbers=none]
  $η_2$:= $\lapm{(2N/\priv)}$;
  $\tilde q = q[i]+η_2$
  if ($\tilde q≥\tT$) then 
    out:= ($\tilde q$)::out;
\end{lstlisting}
For this incorrect variant, the refinement algorithm refines the type of
$\tilde q$ to be $\distance{q}[i]+α$ when $\tilde q$ is defined, where
$Γ(η_2)=α$.  Moreover, during type checking, ($(\tilde q)::\cod{out}$)
generates a constraint ($\distance{q}[i]+α=0$) by rule~\ruleref{T-Cons}. Hence,
it must be true that $Γ(\tilde q)=\tyreal_0$ and
$Γ(η_2)=\tyreal_{-\distance{q}[i]}$ after type inference.
Moreover, after type checking, $\cod{η_2:= \lapm{(2N/\priv)}}$ is transformed
to \[(\cod{havoc~η_2; \vpriv := \vpriv + |\distance{q}[i]|(\priv/2N)})\]
However, we cannot prove that the incorrect variant is $\priv'$-private for any
$\priv'$. The reason is that $\vpriv$ in the transformed program is clearly not
bounded by any constant $\priv'$: $\vpriv$ increments by $\priv/2N$ in the
worst case in each loop iteration, but the number of iterations is unbounded
(when most iterations take the ``else'' branch). 

The failure of a formal proof of the incorrect variant also sheds lights on how
to fix it. For example, if we bound the number of iterations to be $N$, then
the incorrect variant is fixed (though with a different privacy cost).

\subsection{Smart Summation}
\label{sec:smartsum}
We next study a smart summation algorithm verified previously (with heavy
annotations) in~\cite{Barthe12,Barthe14}. The pseudo code, shown in Figure
\ref{fig:smartsum}, is adapted from~\cite{Barthe14}. The goal of this smart sum
algorithm is to take a finite sequence of bits $q[0], q[1], \dots, q[T]$ and
output a noisy version of their partial sum sequence: $q[0], ~q[0] + q[1], ~
\dots, ~\sum_{i=0}^T q[i]$. One naive approach is to add Laplace noise to each
partial sum (partial implementation is shown in Figure~\ref{fig:partialsum}).  
An alternative naive algorithm is to compute a noisy bit $\tilde{q}[i] = q[i] +
\laplace(1/\priv)$ for each $i$ and output $\tilde{q}[0], ~\tilde{q}[0] +
\tilde{q}[1], ~ \dots, ~\sum_{i=0}^T \tilde{q}[i]$. However, in both
approaches, the noise will swamp the true counts.

A much smarter approach was proposed by \citet{chan11continualTISSEC}.
Intuitively, their algorithm groups $q$ into nonoverlapping blocks of size $M$.
So block $G_1=\{q[0], q[1],\dots, q[M-1]\}$, $G_2=\{q[M], q[M+1], q[M+2],
\dots, q[2M-1]\}$, etc.  Then it maintains 2 levels of noisy counts: (1) the
noisy bits $\tilde{q}[i] = q[i]+\laplace(1/\priv)$ for each $i$, and (2) the
noisy block sums $\tilde{G}_j = \sum_{i\in G_j}q[i]+\laplace(1/\priv)$ for each
block. The partial sums are computed from these noisy counts in the following
way. Consider the sum of the first $\ell+1$ bits: $\sum_{i=0}^\ell q[i]$. We
can represent $\ell+1 = xM + c$ where $x = \lfloor \frac{\ell+1}{M}\rfloor$ and
$c = \ell+1 \mod M$. Hence, the  noisy partial sum can be computed from the
noisy sum of the first $x$ blocks plus the remaining $c$ noisy bits: 
$\tilde{G}_1 + \tilde{G}_2 +\dots + \tilde{G}_x +
\sum\limits_{j=0}^{c-1}\tilde{q}[xB+j]$. This algorithm is shown in Figure
\ref{fig:smartsum}. 
The ``if'' branch keeps track of block boundaries and is responsible for
summing up the noisy blocks. The ``else'' branch is responsible
for adding in the remaining loose noisy bits (once there are enough loose bits
to form a new block $G_j$, we use its noisy sum $\tilde{G}_j$ rather than the
sum of its noisy bits).

\begin{figure}
\small
\algrule
\funsigfour{SmartSum}{$\priv$, M, T:$\annotation{\tyreal_0}$; q:\annotation{\tylist~\tyreal_*}}{($\cod{out}:\annotation{\tylist~\tyreal_0}$)}{$∀i.~‐1≤(\distance{q}[i])≤1 ∧\\ \makebox[2.5cm]{} (∀i.~\distance{q}[i]\not=0 ⇒ (∀j\not=i.~\distance{q}[j]=0))$}
\algrule
$\framebox{\ensuremath{\cod{next, n, i}:\tyreal_0;\cod{sum}:\tyreal_*;η_1:\tyreal_{-\distance{\cod{sum}}-\distance{q}[i]};η_2:\tyreal_{-\distance{q}[i]}}}$
\begin{lstlisting}[frame=none]
next:=0; n:=0; i:=0; sum := 0;
while i $\leq$ T
  if (i + 1) mod M = 0 then
    $η_1$ := $\lapm{1/\priv}$;
    n := n + sum + q[i] + $η_1$;
    next:= n;
    sum := 0;
    out := next::out; 
  else
    $η_2$ := $\lapm{1/\priv}$;
    next:= next + q[i] + $η_2$;
    sum := sum + q[i];
    out := next::out;
  i := i+1;
\end{lstlisting}
The transformed program, where underlined commands are added by the type
system. Only one annotation (loop invariant) is needed from the programmer to
verify the postcondition
$\vpriv≤2\priv$: $(\vpriv+\distance{\text{sum}}>0⇒∀j≥i.~\distance{q}[j]=0)∧
(\distance{\text{sum}}>0⇒\vpriv≤\priv) ∧ (\distance{\text{sum}}≤1.0) ∧
(\vpriv≤2\priv)$
\begin{lstlisting}[frame=none]
$\instrument{\vpriv = 0;}$
next:=0, n:=0, i:= 0, sum:=0;$\instrument{\distance{\text{sum}} := 0}$;
while i $\leq$ T
   if (i + 1) mod M = 0 then
      $\instrument{\havoc{η_1};\vpriv := \vpriv + |\distance{\cod{sum}}+ \distance{q}[i]|\priv;}$
      n := n + sum + q[i] + $η_1$;
      next:= n;
      sum := 0;
      $\instrument{\distance{\text{sum}} := 0;}$
      out := next::out; 
   else
      $\instrument{\havoc{η_2};\vpriv := \vpriv + |\distance{q}[i]|\priv};$
      next:= next + q[i] + $η_2$;
      sum := sum + q[i];
      $\instrument{\distance{\text{sum}} := \distance{\text{sum}}+|\distance{q}[i]|;}$
      out := next::out;
   i := i+1;
\end{lstlisting}
\caption{The SmartSum algorithm.}
\label{fig:smartsum}
\end{figure}

Assume for two adjacent databases, at most one query answer differs, and for
that query, its distance is at most one (this adjacency assumption is provided
as the precondition in function signature). Hence, for queries that generate
the same answer on adjacent databases, no privacy cost is paid. However,
privacy cost is paid twice to hide the query answers that differ: when the
noisy sum for the block containing that query is computed, and when the noisy
version of that query is used.  Hence informally, the SmartSum algorithm
satisfies $2\priv$-privacy where $\priv$ is a function parameter.  

\paragraph{Verification using \lang} \lang successfully infers the type
annotations shown in the box under function signature in
Figure~\ref{fig:smartsum}. Since all type variables are only involved in
equality constraints, only one solution exists. The transformed program is
shown at the bottom of Figure~\ref{fig:smartsum}.

By Theorem~\ref{thm:privacy}, to prove SmartSum is $2\priv$-private, it is
sufficient to verify that the postcondition $\vpriv≤2\priv$ holds for the
transformed program. We notice that this program maintains the loop invariant
shown in Figure~\ref{fig:smartsum}. One observation is that once the privacy
cost or the distance of variable $\cod{sum}$ gets positive, the query that
generates different answers must have been handled already. Hence, rest queries
must have identical answers on adjacent databases
($(\vpriv+\distance{\text{sum}}>0⇒∀j≥i.~\distance{q}[j]=0)$). Using the loop
invariant, we formally verified the desired postcondition $\vpriv≤2\priv$ using
Dafny.

\if0
This is formally verified by Defny (check the offset issue):
\begin{lstlisting}
method Main(q: array<real>, N:int) 
  requires q != null
  requires q.Length > 3
  requires N != 0
  requires forall k :: (0 <= k < q.Length && q[k] != 0.0) ==> (forall l :: (l != k && 0 <= l < q.Length) ==> q[l] == 0.0)
  requires forall k :: (0 <= k < q.Length) ==> (0.0 <= q[k] <= 1.0)
{
  var dc:real := 0.0;
  var index:int := 0;
  var cost:real := 0.0;
  var count:int := 0;
  // assert (index+1 < q.Length && q[index+1]>0.0) ==> (q[index]==0.0);
  //assert (forall k:: 0 <= k < index+1 ==> q[k] == 0.0) ==> (forall k:: 0 <= k < index ==> q[k] == 0.0);
  while index < q.Length-1 
    invariant (0 <= index < q.Length)
    invariant (cost+dc > 0.0) ==> (forall k:: index <= k < q.Length ==> q[k] == 0.0)
    invariant dc > 0.0 ==> cost <= 1.0
    invariant dc <= 1.0
    invariant cost <= 2.0
  {
    if count % N == 0 {
      cost := cost + dc + q[index];
      dc := 0.0;
    }
    else {
      cost := cost + q[index];
      dc := dc + q[index];
    }
    index := index +1;
  }
  assert cost <= 2.0;
}
\end{lstlisting}
\fi

\if0
Due to some overapproximation in the current system, we do not distinguish the
case of outputing all "max" vs. outputing one of them. May eliminate this
restriction later.
\subsection{Exponential mechanism}
Exponential mechanism can be implemented explicitly in terms of the one-sided
Laplace mechanism~\cite{diffpbook}, as shown in Figure~\ref{fig:exponential}.

\begin{figure}
\small
\algrule
\funsigfour{Exponential}{$\annotation{\tyreal_0}$ $\priv$, N, \annotation{\tylist~\tyreal_*} q}{\annotation{\tylist~\tyreal_0} max}{$∀i.~‐1≤(\distance{q}[i])≤1$}
\algrule
\begin{lstlisting}[frame=none]
$\annotation{\tylist~\tyreal_0}$ i:=0, bq:=0;
$\annotation{\tylist~\tyreal_{-1}}$ cq:=0;
while i< N
   $\annotation{\tyreal_{-\distance{q}[i]}}$ $η_1$ := $\lapm{2/\priv}$;
   cq := q[i] + $η_1$;
   if (cq > bq $∨$ i=1)
     max := i;
     bq := cq;
   i := i+1;
\end{lstlisting}
The transformed program, where underlined commands are added by the type system:
\begin{lstlisting}[frame=none]
\end{lstlisting}
\caption{The Exponential mechanism implemented by one-sided Laplace mechanism.}
\label{fig:exponential}
\end{figure}
\fi

\subsection{Categorical Outputs}
\label{sec:categorical}

\begin{figure}
\small
\algrule
\funsigfour{PrivBernoulli}{t$\annotation{:\tyreal_*}$}{b\annotation{:\bool}}{$0≤t≤1 ∧ 0≤t+\distance{t}≤1$}
\algrule
\begin{lstlisting}[frame=none]
$η := \uniform$;
if ($η ≤ t$) then 
  b := $\true$;
else
  b := $\false$;
\end{lstlisting}
The transformed program where underlined commands are added by the type system:
\begin{lstlisting}[frame=none]
$\vpriv$ :=0;
$\instrument{\havoc{η}; \vpriv = \vpriv - \log (1+(η≤t)?(\distance{t}≥0?0:(\distance{t}/t))}$$\Suppressnumber$
                           $\instrument{:(\distance{t}≤0?0:(\distance{t}/t)))}$;$\Reactivatenumber$
if ($η ≤ t$) then 
  b := $\true$;
else 
  b := $\false$;
\end{lstlisting}
\caption{The PrivBernoulli algorithm.}
\label{fig:privtree}
\end{figure}

Until now, we have used the Laplace mechanism, which generates numerical
outputs, as the primary randomization tool for ensuring differential privacy.
It might seem that categorical attributes would require completely different
techniques, but indeed, they can be cleanly incorporated into \lang with a new
typing rule. We briefly show how this can be done by considering a simple
mechanism that takes a private-data-dependent probability $t$ and outputs
$\true$ with probability $t$ and $\false$ with probability $1-t$. The algorithm
shown in Figure \ref{fig:privtree}.

The standard trick of generating an output $\true$ with probability $t$ can be
done by generating an uniform [0,1] random variable $x$ and returning $\true$
if $x\leq t$, and $\false$ otherwise. This trick converts numerical randomness
into categorical randomness with a notion of distance that can be aligned
between executions under related databases.
Generalizations to a larger output domain are routine and, in this way, can
allow some instantiations of the exponential mechanism
\cite{exponentialMechanism}.

To calculate the privacy cost of aligning the binary output, we need to add a
single typing rule to capture the property of uniform [0,1] distribution:
\begin{mathpar}
\inferrule*[]{ Γ(η) = \tyreal_{η\cdot \dexpr}\and -1<\dexpr≤0}
                           {Γ \proves η := \uniform \transform \cod{havoc}_{[0,1]}{η}; \vpriv = \vpriv-\log (\dexpr+1)}
\end{mathpar}

This rule requires that the random sample is aligned by a distance of $η\cdot
\dexpr$ for some $\dexpr$ (i.e., we map $η$ to $(\dexpr+1)η$ in the randomness
alignment). Easy to check this mapping is injective. By property of uniform
distribution, the privacy cost of any such assignment is $-\log(\dexpr+1)$
where $-1≤\dexpr≤0$.

To integrate this typing rule and uniform distribution into \lang, we need to
establish that: 1) the faithfulness of the transformation, and 2) the uniform
distribution satisfies Lemma~\ref{lem:pointwise}. The former is easy to check,
and we establish the latter in 
\ifreport
the appendix.
\else
the full version of this paper~\cite{report}.
\fi

With this new typing rule for uniform distribution, we can precisely compute
the privacy cost of the algorithm in Figure~\ref{fig:privtree} by providing the
following type for $η$: $Γ(η)=η\cdot\dexpr$ where
\begin{align*}
\dexpr=(η≤t)?&(\distance{t}≥0?0:(\distance{t}/t)) \\
         :&(\distance{t}≤0?0:(\distance{t}/t)) 
\end{align*}
During type checking,  rule~\ruleref{T-ODot} checks the following constraint
for the branch condition $η≤t ⇔ η+η\cdot\dexpr≤t+\distance{t}$,
which can be discharged by a SMT solver. Hence, the algorithm is transformed to
the program at the bottom of Figure~\ref{fig:privtree}. By the fact that the
newly added random source and typing rules satisfies Lemma~\ref{lem:pointwise},
the privacy cost of this subtle example is provably bounded by the transformed
cost formula in the transformed program~\footnote{We note that without \lang,
the precise calculation of privacy cost is very difficult and error-prone. To
show that the randomness alignment cancels out the difference in the
private-data-dependent probability $t$, we need to analyze four cases.  When
outputting $\true$ and $\distance{t}≥0$, the related execution must output
$\true$ as well ($η≤t∧\distance{t}≥0⇒η≤t+\distance{t}$).  When outputting
$\true$ and $\distance{t}<0$, this alignment maps $η$ to
$η(\dexpr+1)=η((t+\distance{t})/t)$. Hence, $\true$ is the output in the
related execution ($η≤t⇒η(t+\distance{t})/t≤t+\distance{t}$). Similar reasoning
applies to the case outputting  $\false$ too. Moreover, connecting this
alignment to $\priv$-privacy require is even more daunting by a
paper-and-pencil proof.}.

\if 0
\subsection{Multiplicative Weights Exponential mechanism}\label{sec:MWEM}
\begin{figure}
\small
\algrule
\funsigfour{MWEM}{$\annotation{\tyreal_0}$ $\priv$, N, \annotation{\tylist~\tyreal_*} q}{\annotation{\tylist~\tyreal_0} D}{$∀i.~‐1≤(\distance{q}[i])≤1 ∧ \distance{f}≤1$}
\algrule
\begin{lstlisting}[frame=none]
$\annotation{\tyreal_0}$ i:=0;
$\annotation{\tylist~\tyreal_0}$ q' := q$_0$;
while i< N
   f = $λ$q i. |q'[i]-q[i]|
   $η_1 \annotation{\tyreal_0}$ := $\expm{2/\priv}{f}{q}$;
   $η_2 \annotation{\tyreal_{-\distance{q}[η_1]}}$ := $\lapm{1/\priv}$;
   a $\annotation{\tyreal_{0}}$ := q[$\eta_1$]+$η_2$;
   D := update(D,a,$η_1$);
   i := i+1;
\end{lstlisting}
The transformed program where underlined commands are added by the type system:
\begin{lstlisting}[frame=none]
$\vpriv$ :=0; $\vdelta$ := 0;
i:=0;
D:= D$_0$;
while i< N
   $\instrument{\havoc{η_1}; \vpriv := \vpriv + \priv}$;
   $\instrument{\havoc{η_2}; \vpriv := \vpriv + \priv |\distance{q}[i]|}$;
   a := q[i]+$η_2$;
   D := update(D,a,$η_1$);
   i := i+1;
\end{lstlisting}
\caption{The MWEM mechanism.}
\label{fig:exponential}
\end{figure}

Next, we show how to extend our type system with mechanisms other than the
Laplace mechanism. Here, we consider an example of using the Exponential
mechanism inside an algorithm known as MWEM \cite{mwem}.  The code is shown in
Figure~\ref{fig:exponential}.  

The MWEM algorithm is designed to answer linear over a database, using
correlations contained in previous queries to reduce the overall privacy cost.
Each query answer can change by at most $\pm 1$ when considering executions
under related databases. The MWEM algorithm maintains a synthetic database $D$
that is always safe to release. Using the exponential mechanism, it attempts to
choose a query whose answer computed from $D$ differs the most from the answer
computed on the real dataset (i.e., it samples a query $q[\eta_1]$ with
probability proportional to $e^{ E_{\eta_1}\priv /2}$ where $E_{\eta_1}$ is the
difference in answers). It then computes a noisy answer to $q[\eta_1]$ and uses
this differentially private answer to update the synthetic database (for the
purposes of verification, any update code can be used as long as it does not
access the true database).

\paragraph{Exponential mechanism}
The Exponential mechanism~\cite{exponentialMechanism} is a general mechanism
for designing privacy-preserving algorithms with arbitrary (but discrete)
output domains. In our example, $f$ is a scoring function $f: D \times \tyreal
→ \tyreal$. Intuitively, $f$ takes a database and a query index, and return a
value representing how preferable a query index is given the input. The
sensitivity of f, written $\distance{f}$, 

\subsection{Example 4} Here, we consider an incorrect variant that is not
$\priv$ private for any $\priv$.
\begin{lstlisting}
$\prop{Γ(T)=\tyreal_0, Γ(\tT) = \tyreal_{\distance{η}},Γ(count)=\tyreal_0,\distance{η}=(q_i≥\tT?2:0)}$
$\tT = T + \lapm{2/\priv};$
count = 0;
foreach $q_i$ do
    $\assert{∀x,y\in \mathbb{R}.~x≥y ⇔ x+\distance{q_i} ≥ y+\distance{η}}$
    // if we assume $\distance{η}=(q_i≥\tT?v_t:v_f)$, the assertion is
    // equivalent to $\distance{q_i}≥v_t ∧ \distance{q_i}≤v_f$, hence satisfiable
    // But then, the else branch has a non-zero cost when it is executed. For
    // example, when $\distance{q_i}=-1$, the privacy cost is equivalent to 
    // the execution time of else (the number of $\bot$)
    if $q_i≥\tT$ then
        Output $\top$;
        count = count + 1;
    else
        Output $\bot$;
    if count $≥$ c
        abort
\end{lstlisting}
\fi

\section{Related Work}\label{sec:related}
\paragraph{Type systems for differential privacy}
Fuzz~\cite{Fuzz} and its successor DFuzz \cite{DFuzz} reason about the
\emph{sensitivity} (i.e., how much does a function magnify distances between
inputs) of a program. DFuzz combines linear indexed types and lightweight
dependent types to allow rich sensitivity analysis. However, those systems rely
on (without verify) external mechanisms (e.g., Laplace mechanism, Sparse Vector
method) as trusted black boxes to release final query answers, without
verifying those black boxes. \lang, on the other hand, verifies sophisticated privacy-preserving
mechanisms that releases those final answers. Sensitivity
inference~\cite{Antoni13} was proposed in the context of Fuzz. While
sensitivity inference shares the same goal of minimizing type annotation and it
also uses SMT solvers, the very different type system in \lang brings unique
challenges (Section~\ref{sec:challenges}) that do not present in Fuzz.

HOARe$^2$~\cite{Barthe15} and its
extension PrivInfer~\cite{privinfer} have the ability to relate a pair of
expressions via relational assertions that appear as refinements in types.
Hence, they can verify mechanisms that privately release final query answers as
well as private Bayesian inference algorithms. However, HOARe$^2$ and PrivInfer
incur heavy annotation burden on programmers. Moreover, they can not deal with
privacy-preserving algorithms that go beyond the composition theorem (e.g., the
Sparse Vector method). 

\paragraph{Program logic for differential privacy}
Probabilistic relational program
logic~\cite{Barthe12,EasyCrypt,BartheICALP2013,Barthe16,BartheCCS16} use custom relational
logics to verify differential privacy. These systems have successfully verified
privacy for many advanced examples. However, only the very recent work
by~\citet{Barthe16,BartheCCS16} can verify the Sparse Vector method. While
these logics are expressive enough to prove $(\priv,\delta)$ privacy, the main
difficulty with these approaches is that they use custom and complex logics
that incurs steep learning curve and heavy annotation burden.  Moreover, ad hoc
rules for loops are needed for many advanced examples.

The work by~\citet{Barthe14} transforms a probabilistic relational program to a
nondeterministic program, where standard Hoare logic can be used to reason
about privacy. However, the fundamental difference between that work and \lang
is that the former cannot verify sophisticated algorithms where the composition
theorem falls short, since it lacks the power to express subtle dependency
between privacy cost and memory state. Moreover, beneath the surface, that work
and \lang are built on very different principals and proof techniques.
Further, their approach requires heavier annotation burden since both
relational and functional (e.g., bounding privacy cost) properties are reasoned
about in the transformed program, while the former is completely and
automatically handled by the type system of \lang.

The notion of aligning randomness has been used in the recent coupling
method~\cite{Barthe16,BartheCCS16}. While the coupling method is capable of
proving $(\priv,δ)$ privacy and it does not require the injective assumption on
the alignment, the cost of doing so is the steep learning curve and heavy
annotation burden. Technically, the coupling method reasons about privacy for
each possible output (or a set of outputs), while the alignment-based theory
used in this paper aligns two program executions that will produce the same
results. The theory in this paper gives a simple proof, a light-weight type
system, and clear insight behind the type system.

\paragraph{Other language-based methods for differential privacy}
Several dynamic tools exist for enforcing differential privacy.
PINQ~\cite{pinq} tracks (at runtime) the privacy budget consumption, and
terminates the computation when the privacy budget is exhausted.
Airavat~\cite{Roy10Airavat} is a MapReduce-based system with a runtime monitor
that enforces privacy policies controlled by data providers. Recent work
by~\citet{EbadiPOPL2015} proposed Personalised Differential Privacy (PDP),
where each individual has its own personal privacy level and a dynamic system
that implements PDP. There are also methods based on computing bisimulations
families for probabilistic automata~\cite{Tschantz11,Xu2014}. However, none of
these techniques has the expressive power to provide a tight privacy cost bound
for sophisticated privacy-preserving algorithms.

\section{Conclusions and Future Work}\label{sec:conclusions}
The increased usage and deployment of differentially private algorithms underscores the need for formal verification methods to ensure that personal information is not leaked due to mistakes or carelessness. The ability to verify subtle algorithms should be coupled with the ability to infer most of the proofs of correctness to reduce the programmer burden during the development and subsequent maintenance of a privacy-preserving code base.

In this paper, we present a language with a lightweight type system that allows us to separate privacy computation from the alignment of random variables in hypothetical executions under related databases. Thus enabling inference and search for proofs with the minimal privacy costs.

These techniques allow us to verify (with much fewer annotations) algorithms that were out of reach of the state of the art until recently. However, additional extensions are possible. The first challenge is to extend these methods to algorithms that use hidden private state to reduce privacy costs. One example is the noisy max algorithm that adds noise to each query and returns the index of the query with the largest noisy answer (although all noisy answers are used in this computation, the fact that their values are kept secret allows more refined reasoning to replace the composition theorem). The second challenge is verifying subtle algorithms such as PrivTree \cite{PrivTree}, in which intermediate privacy costs depend on the data (hence cannot be released) but their sum can be bounded in a data-independent way. This is another case where the composition theorem can fail since it requires data-independent privacy costs. Lastly, \lang currently only verifies $\priv$-privacy, which has a nice point-wise property. We leave extending \lang to $(\priv,\delta)$-privacy as future work.

\section*{Acknowledgments}

We thank Adam Smith, our shepherd Marco Gaboardi and anonymous reviewers for
their helpful suggestions. This work was supported by NSF grants CNS-1228669
and CCF-1566411.

\balance
\bibliographystyle{abbrvnat}
\bibliography{diffpriv}

\ifreport
\clearpage
\appendix
\nobalance
\addcontentsline{toc}{section}{Appendices}
\ifreport
\section*{Appendix}
\else
\section*{Supplementary material of submission \#234}
\fi


\section{Soundness Proof}

In the source language semantics (Figure~\ref{fig:semantics}), the variable
that a star-typed variable depends on (e.g., $\distance{\cod{sum}}$ in
Figure~\ref{fig:partialsum}) is invisible.  We first extend the semantics to
make the manipulation of such invisible variables explicit, by the following
rule for assignments:
\begin{align*}
\trans{x:=e}_\store &= \unitop~(\subst{\subst{m}{x}{\evalexpr{e}{m}}}{\distance{x}}{\evalexpr{\dexpr}{m}}) \text{ when } Γ(x)=\tyreal_*,
\end{align*}
where $Γ(e)=\dexpr$. 

It is straightforward to check that the extended semantics (parameterized on
the type system) is consistent with the original semantics in
Figure~\ref{fig:semantics}, as it does not change the distribution on the
variables that are visible in the source program. The extended semantics is
needed to close the gap between the source language and the one that formal
reasoning is applied on.

Next, we prove a few auxiliary lemmas.

\begin{lemma}
\label{lem:point}
\begin{multline*}
∀m_1,m_2,Γ~s.t.~m_1~Γ~m_2, \text{ we have } \\
∀m.~\unitop~{m_1}(m)=\unitop~{m_2}(Γ(m))
\end{multline*}
\end{lemma}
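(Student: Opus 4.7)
The plan is to unfold the definition of $\unitop$ from Section~\ref{sec:dist} and perform a straightforward case analysis. Recall $\unitop~a = \dgdist_a$, so $(\unitop~a)(b) = 1$ if $b = a$ and $0$ otherwise. Also, the hypothesis $m_1~\Gamma~m_2$ says precisely that $\Gamma(m_1) = m_2$ in the functional interpretation of $\Gamma$ from Definition~\ref{def:relation}.

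First I would handle the case $m = m_1$. Here $(\unitop~m_1)(m) = 1$, and since $\Gamma(m) = \Gamma(m_1) = m_2$, we also get $(\unitop~m_2)(\Gamma(m)) = (\unitop~m_2)(m_2) = 1$, so the two sides agree.

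Next I would handle the case $m \neq m_1$. Here $(\unitop~m_1)(m) = 0$, so it suffices to show $\Gamma(m) \neq m_2$. This is where I would need the well-formedness assumption $\proves \Gamma$, which guarantees that $\Gamma$ is injective; combined with $\Gamma(m_1) = m_2$, injectivity gives $\Gamma(m) \neq \Gamma(m_1) = m_2$, hence $(\unitop~m_2)(\Gamma(m)) = 0$, matching the left-hand side.

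The main obstacle here is not mathematical but book-keeping: the lemma as stated does not explicitly list $\proves \Gamma$ among its hypotheses, yet injectivity is essential for the $m \neq m_1$ case (otherwise two distinct memories could both map under $\Gamma$ to $m_2$, making the right-hand side nonzero while the left-hand side is zero). I expect this lemma is invoked inside the base/assignment cases of Lemma~\ref{lem:pointwise}, where $\proves \Gamma$ is available globally, so the hypothesis is carried implicitly. I would either make $\proves \Gamma$ explicit in the statement or note at the start of the proof that it is inherited from the surrounding context of Lemma~\ref{lem:pointwise}.
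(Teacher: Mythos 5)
Your proof is correct and considerably more careful than the paper's, which consists of the single line ``By the fact that $\Gamma$ is a function.'' Functionality alone only settles the $m = m_1$ case (it makes $\Gamma(m_1) = m_2$ well-defined and unique, so both sides evaluate to $1$); the $m \neq m_1$ case needs, exactly as you observe, that no \emph{other} memory maps under $\Gamma$ to $m_2$ --- that is, injectivity at $m_1$ --- which is precisely $\proves \Gamma$ and not a consequence of $\Gamma$ being a function. So your reading of the lemma is right, and your guess about where the hypothesis is carried is also right: the lemma is only ever invoked inside the $\skipcmd$ and assignment cases of Lemma~\ref{lem:pointwise}, where $\proves \Gamma$ is an explicit hypothesis of the enclosing statement.

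One further remark worth making: both of those invocations actually consume only the inequality $\unitop~m_1(m) \leq \unitop~m_2(\Gamma(m))$ (to derive $\evalexpr{c}{m_1}(m) \leq \exp(0)\,\evalexpr{c}{m_2}(\Gamma(m))$), and that weaker statement holds without injectivity, since in the $m \neq m_1$ case the left-hand side is $0$ regardless of what $\Gamma(m)$ is. Had the lemma been stated with $\leq$, the paper's one-liner would have been adequate; stated as an equality, your case analysis and your flagging of the hidden $\proves \Gamma$ dependency are the correct account.
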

\begin{proof}
By the fact that $Γ$ is a function.
\end{proof}

\begin{lemma}[Expression]
\label{lem:expr}
\begin{multline*}
∀e,\store_1,\store_2, Γ~s.t.~\store_1~Γ~\store_2 ∧ Γ \proves e : \basety_{\dexpr}, \text{ we have } \\
\evalexpr{e}{\store_1} + \evalexpr{\dexpr}{\store_1} = \evalexpr{e}{\store_2}
\end{multline*}
\end{lemma}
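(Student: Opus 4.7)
The plan is to proceed by structural induction on the typing derivation $Γ \proves e : \basety_{\dexpr}$, with $\store_1, \store_2$ and $Γ$ fixed. For non-numeric base types ($\bool$, $\tylist~τ$, $τ_1→τ_2$), which are syntactic sugar for a $0$ distance, the conclusion is read as equality of values under the two memories.

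For the base cases: a numeric literal $r$ is assigned distance $0$ by \ruleref{T-Num}, giving $r + 0 = r$; a Boolean constant is handled similarly. For a normal variable $x$ typed by \ruleref{T-Var} with $Γ(x) = \basety_{\dexpr}$, the conclusion is precisely the definition of the $Γ$-relation. For a star-typed variable $x$ handled by \ruleref{T-VarStar}, the extended semantics described just before the lemma makes $\distance{x}$ a first-class variable, so the $Γ$-relation governs $x$ through the type $\basety_{\distance{x}}$ in exactly the same way. Random variables $η$ are treated identically.

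For the linear inductive cases, \ruleref{T-OPlus} follows by applying the induction hypothesis to $e_1, e_2$ and using linearity of $\oplus$ in the reals. Rule \ruleref{T-OTimes} is easy since both operands are forced to have distance $0$: the IH gives $\evalexpr{e_i}{\store_1} = \evalexpr{e_i}{\store_2}$, so their product (or quotient) matches and the synthesized distance $0$ satisfies the conclusion trivially. Negation, cons, indexing, and conditional expressions are handled by direct application of the IH to subexpressions; in the indexing rule \ruleref{T-Index} the index has type $\tyreal_0$, so by IH both indices agree and the same list element is selected in both executions; in \ruleref{T-Select} the guard has type $\bool$ and by IH evaluates to the same truth value.

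The main obstacle is the comparison case \ruleref{T-ODot}, which relies on the semantic side condition
\[\pre \Rightarrow \bigl(e_1 \odot e_2 \iff (e_1 + \dexpr_1) \odot (e_2 + \dexpr_2)\bigr).\]
Here I would assume, as the type system does implicitly, that $\pre$ holds on $\store_1$ (this is the global invariant threaded through all typing judgments). By the IH applied to $e_1, e_2$, one has $\evalexpr{e_i}{\store_2} = \evalexpr{e_i}{\store_1} + \evalexpr{\dexpr_i}{\store_1}$, so evaluating the right side of the biconditional under $\store_1$ coincides with evaluating $e_1 \odot e_2$ under $\store_2$. The side condition then yields equality of the two Boolean outcomes, which matches distance $0$. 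The delicate point is justifying the use of $\pre$ on $\store_1$; this is precisely what the \emph{Point-Wise soundness} lemma relies on when propagating the invariant, and the expression lemma only needs to assume the invariant is met at the current state rather than maintain it.
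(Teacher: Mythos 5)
Your proof takes essentially the same approach as the paper's: structural induction on $e$, with the $Γ$-relation discharging the variable cases, linearity handling $\oplus$, and the semantic side condition of \ruleref{T-ODot} combined with the induction hypothesis handling comparisons. You are actually slightly more explicit than the paper in flagging that the \ruleref{T-ODot} step silently relies on $\pre$ holding at $\store_1$, which the paper's proof glosses over by writing ``for any memory $m$.''
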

\begin{proof}
Induction on the structure of $e$. Interesting cases are follows.

When $e$ is $x$ or $η$, result is true by the definition of $m_1~Γ~m_2$. 

When $e$ is $e_1\oplus e_2$, let $\evalexpr{e_i}{\store_1}=v_i$,
$\evalexpr{e_i}{\store_2}=v_i'$ and $Γ\proves e_i:\basety^i_{\dexpr_i}$ for
$i\in\{1,2\}$. Then by typing rule, we have $\dexpr=\dexpr_1+\dexpr_2$.
By induction hypothesis, we have $v_i'=v_i+d_i$, where
$d_i=\evalexpr{\dexpr_i}{\store_1}$. Hence, $\evalexpr{e}{m_2}=v_1'\oplus
v_2'=(v_1+d_1)\oplus(v_2+d_2)=(v_1\oplus v_2)+(d_1\oplus
d_2)=\evalexpr{e_1\oplus e_2}{m_1} +
\evalexpr{\dexpr_1\oplus\dexpr_2}{m_1}=\evalexpr{e}{m_1}+\evalexpr{\dexpr}{m_1}$.

When $e=e_1\odot e_2$, let $Γ\proves e_i:\basety^i_{\dexpr_i}$ for
$i\in\{1,2\}$. Then by induction hypothesis, we have
$\evalexpr{e_i}{m_1}+\evalexpr{\dexpr_i}{m_1}=\evalexpr{e_i}{m_2}$ for
$i\in\{1,2\}$. By rule~\ruleref{T-ODot}, for any memory $m$, $\evalexpr{e_1\odot
e_2}{m}=\evalexpr{(e_1+\dexpr_1)\odot (e_2+\dexpr_2)}{m}$.  Hence,
$\evalexpr{e_1\odot e_2}{m_1}=\evalexpr{(e_1+\dexpr_1)\odot
(e_2+\dexpr_2)}{m_1}=\evalexpr{e_1+\dexpr_1}{m_1}\odot
\evalexpr{e_2+\dexpr_2}{m_1}=\evalexpr{e_1}{m_2}\odot \evalexpr{e_2}{m_2}=
\evalexpr{e_1\odot e_2}{m_2}$.  
\end{proof}

\paragraph{Proof of Lemma~\ref{lem:pointwise}}
\begin{multline*}
∀c,c',m_1,m_2,m,Γ.~\proves Γ∧ Γ\proves c \transform c' ∧ \store_1~Γ~\store_2,\text{ we have } \\
\evalexpr{c}{\store_1}(m)≤\exp(\max (\restrict{c'}{m_1}{m}))\evalexpr{c}{\store_2}(Γ(m))
\end{multline*}
\begin{proof} By structural induction on $c$.
\begin{itemize}
\item
Case $\skipcmd$: $c'=\skipcmd$ by typing rule. Hence, $\max
(\restrict{c'}{m_1}{m})=0$. Desired result is true by Lemma~\ref{lem:point} and
the semantics of $\skipcmd$.

\item
Case $x:=e$: by the transformation, we have  $\max (\restrict{c'}{m_1}{m})=0$.
Hence, by the semantics and Lemma~\ref{lem:point}, it is sufficient to show
that the memories after the assignment are related by $Γ$.

We first show $m_1'(x)+\evalexpr{\dexpr}{m_1'} =m_2'(x)$.

\begin{itemize}
\item When $Γ(x)=\basety_{\dexpr}$, we need to show that $m_1'~Γ~m_2'$ where
$m_1'=\subst{m_1}{x}{\evalexpr{e}{m_1}}$ and
$m_2'=\subst{m_2}{x}{\evalexpr{e}{m_2}}$ by the semantics.
By typing rule, we have $Γ\proves e:\basety_{\dexpr}$ as well. By
Lemma~\ref{lem:expr}, $\evalexpr{e}{m_1}+\evalexpr{\dexpr}{m_1} =
\evalexpr{e}{m_2}$. Hence, we have $m_1'(x)+\evalexpr{\dexpr}{m_1}=m_2'(x)$.
Since $\dexpr$ may only depend on immutable variables in this case,
$\evalexpr{\dexpr}{m_1}=\evalexpr{\dexpr}{m_1'}$. So
$m_1'(x)+\evalexpr{\dexpr}{m_1'}=m_2'(x)$ as desired. 

\item When $Γ(x)=\basety_*$, $Γ\proves x:\basety_{\distance{x}}$. Hence,
$m_1'(x)+m_1'(\distance{x})=\evalexpr{e}{m_1}+\evalexpr{\dexpr}{m_1}$, where
$Γ(e)=\dexpr$, by the extended semantics. By Lemma~\ref{lem:expr}, this is
identical to $\evalexpr{e}{m_2}$, which is $m_2'(x)$ by the semantics.
\end{itemize}

Second, we show $m_1'(y)+\evalexpr{\dexpr}{m_1'} =m_2'(y)$, where $Γ\proves
y:\dexpr$ for $y\in\dom(Γ) ∧ y\not=x$. When $y\in\Vars$, its type cannot depend
on $x$, which is mutable. So the desired result is true. For $η\in Η$, its type
only depends on the memory state when $η$ is used. So the desired result is
true as well. 

\item
Case $\ifcmd{e}{c_1}{c_2}$: by typing rule, $Γ \proves e:\bool_0$. By
Lemma~\ref{lem:expr}, $\evalexpr{e}{m_1}=\evalexpr{e}{m_2}$. Hence, the same
branch is taken in $m_1$ and $m_2$. Desired result is true by induction
hypothesis.

\item
Case $c_1;c_2$: For any $m$ such that $\evalexpr{c_1;c_2}{m_1}(m)\not= 0$,
there exists some $m'$ such that 
\[\evalexpr{c_1}{m_1}(m')\not=0 ∧ \evalexpr{c_2}{m'}(m)\not=0\] 
By induction hypothesis, we have
\[\evalexpr{c_1}{m_1}(m')≤\exp(\priv_1)\evalexpr{c_1}{Γ(m_1)}(Γ(m'))\]
\[\evalexpr{c_2}{m'}(m)≤\exp(\priv_2)\evalexpr{c_2}{Γ(m')}(Γ(m))\]
where $\priv_1=\max (\restrict{c_1'}{m_1}{m'})$ and 
$\priv_2=\max (\restrict{c_2}{m'}{m})$. Hence,
\begin{multline*}
\evalexpr{c_1}{m_1}(m')\cdot\evalexpr{c_2}{m'}(m)≤ \\
\exp(\priv_1+\priv_2)\evalexpr{c_1}{Γ(m_1)}(Γ(m'))\cdot\evalexpr{c_2}{Γ(m')}(Γ(m))
\end{multline*}
Notice that $\extmemfull{m'}{\priv_1}{0}\in
\evalexpr{c_1}{\extmemfull{m_1}{0}{0}}$ and $\extmemfull{m}{\priv_2}{0}\in
\evalexpr{c_2}{\extmemfull{m'}{0}{0}}$ since $\priv_1$ and $\priv_2$ maximize
privacy costs among consistent executions by definition. Hence,
$\extmemfull{m}{\priv_1+\priv_2}{0}\in
\evalexpr{c_1';c_2'}{\extmemfull{m_1}{0}{0}}$. Therefore,
$\priv_1+\priv_2≤\max (\restrict{c_1;c_2}{m_1}{m})$.

So for any $m$,
\begin{align*}
\evalexpr{c_1;c_2}{m_1}(m) &= \sum_{m'} \evalexpr{c_1}{m_1}(m')\cdot \evalexpr{c_2}{m'}(m) \\
&≤ \exp(\priv') \sum_{m'} \evalexpr{c_1}{Γ(m_1)}(Γ(m'))\cdot \evalexpr{c_2}{Γ(m')}(Γ(m))\\
&≤ \exp(\priv') \sum_{m'} \evalexpr{c_1}{Γ(m_1)}(m')\cdot \evalexpr{c_2}{m'}(Γ(m))\\
&≤ \exp(\priv') \evalexpr{c_1;c_2}{m_2}(Γ(m))
\end{align*}
where $\priv'=\max (\restrict{c_1;c_2}{m_1}{m})$. Notice that the change
of variable in the second to last inequality only holds when $Γ$ is an
injective (but not necessarily onto) mapping, which is true due to the
assumption $\proves Γ$.

\item
Case $\whilecmd{e}{c}$: let $W=\whilecmd{e}{c}$. By typing rule, $Γ\proves
e:\bool_0$. Hence, $\evalexpr{e}{m_1'}=\evalexpr{e}{m_2'}$ for any
$m_1'~Γ~m_2'$. We proceed by by natural induction on the number of loop
iterations (denoted by $i$) under $m_1$.

When $i=0$, $\evalexpr{b}{m_1}=\false$. So $\evalexpr{b}{m_2}=\false$ since
$m_1~Γ~m_2$. By semantics, $\evalexpr{W}{m_1}=\unitop~m_1$ and
$\evalexpr{W}{m_2}=\unitop~m_2$, and $\max (\restrict{W}{m_1}{m})=0$.
Desired result is true by Lemma~\ref{lem:point}.

Consider $i=j+1$. $\evalexpr{b}{m_1}=\true$. So $\evalexpr{b}{m_2}=\true$ since
$m_1~Γ~m_2$.  By semantics, $\evalexpr{W}{m_i}=\evalexpr{c;W}{m_i}$ for $i\in
\{1,2\}$, and the latter $W$ iterates for $j$ times. By induction hypothesis
and a similar argument as the sequential case,
$\evalexpr{W}{m_1}(m)≤\exp(\max
(\restrict{W}{m_1}{m})\evalexpr{W}{m_2}(Γ(m))$.

\item
Case $η:=\lapm r$: let $μ_r=\lapm r$. Since $μ_r$ is the Laplace distribution with
a scale factor of $r$, we have 
\[∀v,d\in \mathbb{R}.~μ_r(v) ≤\exp(|d|\times r) μ_r (v+d)\]

When ∄$v.~m=\subst{m_1}{η}{v}$ , $\evalexpr{η:=\lapm r}{m_1}(m)=0$ by the
semantics. Hence, desired inequality is trivial. 

When $m=\subst{m_1}{η}{\cod{v}}$ for some constant $\cod{v}$, we have for any
$d\in \mathbb{R}$,
\begin{align*}
  &\evalexpr{η:=\lapm r}{m_1}(m) \\
=~&μ_r (\cod{v})\\
≤~&\exp(|d|\cdot r) μ_r(\cod{v}+d)
\end{align*}

Let $Γ(η)=\tyreal_{\dexpr}$ and $\evalexpr{\dexpr}{m}=\cod{d}$ for some
constant $\cod{d}$.  Since $m_1~Γ~m_2$,
$\subst{m_1}{η}{\cod{v}}~Γ~\subst{m_2}{η}{\cod{v+d}}$. That is,
$Γ(m)=\subst{m_2}{η}{\cod{v+d}}$.  By the semantics, 
\begin{align*}
\evalexpr{η:=\lapm r}{m_2}(Γ(m)) &=\evalexpr{η:=\lapm r}{m_2}(\subst{m_2}{η}{\cod{v+d}}) \\
                                 &=μ_r(\cod{v+d})
\end{align*}
Hence, we have
\[\evalexpr{η:=\lapm r}{m_1}(m)≤\exp(|\cod{d}|\cdot r) \evalexpr{η:=\lapm
r}{m_2}(Γ(m))\]
when $m=\subst{m_1}{η}{\cod{v}}$ for some constant $\cod{v}$ too.
By the typing rule~\ruleref{T-Laplace}, the transformed program is
$(\havoc{η};\vpriv := \vpriv+|\dexpr|\cdot r)$. Hence, $\max
(\restrict{c_1'}{m_0}{m})=\evalexpr{|\dexpr|\cdot r}{m} =
|\evalexpr{\dexpr}{m}| \cdot r=|\cod{d}|\cdot r$.
Therefore, we showed that
\begin{multline*}
\evalexpr{η:=\lapm r}{m_1}(m)≤ \\
\exp(\max (\restrict{c_1'}{m_0}{m}))\evalexpr{η:=\lapm r}{m_2}(Γ(m))
\end{multline*}
\end{itemize}
\end{proof}

\paragraph{Proof of Theorem~\ref{thm:privacy}}
\begin{multline*}
∀Γ,c,c',x,\priv.~\proves Γ ∧ Γ\proves (c;\outcmd~e) \transform (c';\outcmd~e) \text{ then }\\
 c'^{\preceq \priv}
 ⇒ c \text{ is } \priv\text{-private}
\end{multline*}
\begin{proof}
By the soundness theorem (Theorem~\ref{thm:soundness}), we have for any injective $Γ$,
$m_1~Γ~m_2$, $∀S⊆\Store, \evalexpr{c}{m_1}(S) ≤ \exp(\priv)
\evalexpr{c}{m_2}(Γ(S))$. For clarity, we stress that all sets are over
distinct elements (as we have assumed throughout this paper).
Let $P=(c;\outcmd~e)$. By typing rule~\ruleref{T-Return}, the return type
$\basety$ must be either $\tyreal$ or $\bool$, and its distance must be zero.
By semantics, for any value set $V⊆ \basety$, 
\begin{align}
\evalexpr{P}{m_1}(V) &=  \evalexpr{c}{m_1}(\{m \mid \evalexpr{e}{m} \in V\}) \\
                     &≤  \exp(\priv) \evalexpr{c}{m_2}(\{Γ(m) \mid \evalexpr{e}{m}\in V\}) \\
                     &≤  \exp(\priv) \evalexpr{c}{m_2}(\{m \mid \evalexpr{e}{m}\in V\}) \\
                     &=  \exp(\priv) \evalexpr{P}{m_2}(V)
\end{align}
where inequality (2) is true due to Theorem~\ref{thm:soundness} (the
application of which requires the injective assumption). For inequality
(3), consider any $m'\in \{Γ(m) \mid \evalexpr{e}{m} \in V\}$. It must be
true that $m'=Γ(m)∧\evalexpr{e}{m} \in V$ for some $m\in \Store$. Due to
Lemma~\ref{lem:expr}, $\evalexpr{e}{m}=\evalexpr{e}{Γ(m)}$ (the distance of
$e$ must be 0). That is, $\evalexpr{e}{m} =v ⇔ \evalexpr{e}{Γ(m)} =v$ for any $v$. Hence, 
\[m'=Γ(m)∧\evalexpr{e}{m}\in V \text{ for some } m\in \Store\]
is the same as 
\[m'=m''∧\evalexpr{e}{m''}\in V  \text{ for some } m\in \Store, \text{ where } m''=Γ(m)\]
Since $m''\in \Store$, $m'\in \{m \mid \evalexpr{e}{m}\in V\}$. Hence, the inequality (3)
holds. We note that (3) is not an equality in general since $Γ$ might not be a
surjection.

Therefore, by definition of differential privacy, $c$ is $\priv$-private.
\end{proof}

\if0
\subsection{Example 2} Here is another example, the Sparse Vector Method with
boolean answers, multiple thresholds, in our algorithm language:
\begin{lstlisting}
$\prop{Γ(\tT) = \tyreal_{1},Γ(count)=\tyreal_0,Γ(α)=\distance{q_i}+\distance{η},\distance{η}=(α≥\tT?2:0),\distance{η'}=1}$
$\tT$ = $T+\lapm{\frac{2c}{\priv}}$
$\prop{\vpriv = \frac{\priv}{2c}}$
count = 0;
foreach $q_i$ do
    $\invariant{\text{count<c} ∧ \vpriv = \frac{\priv}{2c}+\text{count}\times \frac{\priv}{2c}+\text{count}\times 2\times\frac{\priv}{4c}}$
    $α_i$ = $q_i + \lapm{\frac{4c}{\priv}}$;
    $\assert{α≥\tT⇒\distance{α} ≥ 1 ∨ \distance{q_i}=1 }$
    $\assert{\neg(α≥\tT)⇒\neg (\distance{q_i}≥ 1) ∨ \distance{q_i}=1}$ 
    if $α_i≥\tT$ then
        $\prop{\vpriv=\frac{\priv}{2c}+\frac{\priv}{2c}}$
        Output $\top$;
        $\tT$ = $T+\lapm{\frac{2c}{\priv}}$ 
        $\prop{\vpriv = \frac{\priv}{c}+\frac{\priv}{2c}}$
        count = count + 1;
    else
        Output $\bot$;
    $\prop{\vpriv = \frac{\priv}{2c}+\text{count}\times \frac{\priv}{c}}$
    if count $≥$ c
        abort
$\prop{\vpriv ≤ \frac{\priv}{2c} + c\times \frac{\priv}{c}=\frac{\priv}{2c}+\priv}$
// this is not the exact cost ($\priv$) we'd like to prove. More precisely, we
// should count the privacy cost when if the variable depending on the noise is
// used.
\end{lstlisting}
\fi

\section{Formal Semantics for the Target Language}

The denotational semantics interprets a command $c$ in the target language
(Figure~\ref{fig:targetlang}) as a function $\trans{c}:
\Store→\mathcal{P}(\Store)$.  The semantics of commands are formalized as
follows. 
\begin{align*}
\trans{\skipcmd}_\store &= \{m\} \\
\trans{x:=e}_\store &= \{\subst{m}{x}{\evalexpr{e}{m}}\} \\
\trans{\havoc{x}}_\store &= ∪_{r\in \mathbb{R}}~\{\subst{m}{x}{r}\} \\
\trans{c_1;c_2}_\store &= ∪_{m'\in \evalexpr{c_1}{\store}} \evalexpr{c_2}{m'}  \\
\trans{\ifcmd{e}{c_1}{c_2}}_\store &= 
      \begin{cases} 
      \trans{c_1}_\store &\mbox{if } \evalexpr{e}{\store} = \true \\
      \trans{c_2}_\store &\mbox{if } \evalexpr{e}{\store} = \false \\
      \end{cases} \\
\trans{\whilecmd{e}{c}}_\store &= w^*~m\\
\text{where } w^*              &= fix (λf.~λm. \cod{if}~\evalexpr{e}{m}=\true\\
                               &\qquad \cod{then}~(∪_{m'\in \evalexpr{c}{m}}~f~m')~\cod{else}~{\{m\}}) \\
\trans{c;\outcmd{e}}_\store &= ∪_{m'\in \evalexpr{c}{\store}}~\{\evalexpr{e}{m'}\}
\end{align*}

Accordingly, the Hoare logic rules for the target language is mostly standard,
summarized in Figure~\ref{fig:hoare}.

\section{Uniform Distribution}
\begin{lemma}[UniformDist]
The following typing rule is sound w.r.t. Lemma~\ref{lem:pointwise}:
\begin{mathpar}
\inferrule*[]{ Γ(η) = \tyreal_{η\cdot \dexpr}\and -1<\dexpr≤0}
                           {Γ \proves η := \uniform \transform \cod{havoc}_{[0,1]}{η}; \vpriv = \vpriv-\log (\dexpr+1)}
\end{mathpar}
\end{lemma}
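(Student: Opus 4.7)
The plan is to mirror the structure of the Laplace case in the proof of Lemma~\ref{lem:pointwise}, adapting it to the scaling (rather than shifting) alignment of randomness that characterises the uniform mechanism. First I would fix arbitrary $m_1, m_2$ with $m_1\,Γ\,m_2$ and an arbitrary final memory $m$, and note that, because $Γ(η) = \tyreal_{η\cdot\dexpr}$, whenever $m = \subst{m_1}{η}{v}$ the definition of $Γ$ forces $Γ(m) = \subst{m_2}{η}{v + v\dexpr} = \subst{m_2}{η}{v(1+\dexpr)}$. Since $-1 < \dexpr \le 0$ gives $1+\dexpr \in (0,1]$, the map $v \mapsto v(1+\dexpr)$ is strictly monotonic and hence injective on $[0,1]$, which confirms that the well-formedness assumption $\proves Γ$ is preserved by this typing rule even though the distance of $η$ depends on $η$ itself.

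Next I would establish the pointwise density bound needed for the inequality in Lemma~\ref{lem:pointwise}. Write $\mu$ for the density of $\uniform$, so $\mu(v)=1$ for $v\in[0,1]$ and $\mu(v)=0$ otherwise. When $v\in[0,1]$, the image $v(1+\dexpr)\in[0,1+\dexpr]\subseteq[0,1]$, so $\mu(v(1+\dexpr)) = 1$, and the desired bound $\mu(v) \le \exp(-\log(\dexpr+1))\,\mu(v(1+\dexpr))$ reduces to $1 \le 1/(\dexpr+1)$, which holds because $\dexpr+1 \le 1$. When $v\notin[0,1]$, the left side is $0$ and the inequality is trivial. Thus, for every $v$,
\[
\evalexpr{η:=\uniform}{m_1}(\subst{m_1}{η}{v}) \;\le\; \exp(-\log(\dexpr+1))\;\evalexpr{η:=\uniform}{m_2}(\subst{m_2}{η}{v(1+\dexpr)}),
\]
and when $m$ is not of the form $\subst{m_1}{η}{v}$ both sides vanish.

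Finally I would verify that the explicit cost calculation in the transformed command produces precisely the bound used above. For $c' = \cod{havoc}_{[0,1]}(η); \vpriv := \vpriv - \log(\dexpr+1)$, the semantics of $\cod{havoc}_{[0,1]}$ picks an arbitrary value in $[0,1]$ for $η$, so the only consistent trace ending at $m = \subst{m_1}{η}{v}$ has $\priv' = -\log(\dexpr+1)$, giving $\max(\restrict{c'}{m_1}{m}) = -\log(\dexpr+1)$. Substituting this into the density bound yields the statement of Lemma~\ref{lem:pointwise} for the uniform case, completing the proof.

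The main obstacle I expect is the book-keeping around the dependent distance $η\cdot\dexpr$: because the distance expression mentions the random variable itself, the alignment $v \mapsto v(1+\dexpr)$ is nonlinear and the image of $[0,1]$ is a proper subset of $[0,1]$, so one must argue carefully that (i) injectivity of $Γ$ still holds (which needs the strict inequality $\dexpr > -1$), and (ii) the density ratio is bounded everywhere, not only on the image. Both points are handled by the case split on whether $v \in [0,1]$ sketched above, but they are the place where the argument differs substantively from the Laplace case.
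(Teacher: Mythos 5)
Your proposal is correct and, at the top level, tracks the paper's argument for this lemma: fix $m_1\,Γ\,m_2$ and a final $m$, unpack $Γ(m)$ from the dependent distance $\eta\cdot\dexpr$ to get the scaling alignment $v\mapsto v(1+\dexpr)$, establish the point-wise probability inequality, and verify that $\max(\restrict{c'}{m_1}{m})=-\log(\dexpr+1)$. Where you diverge is in how the probability bound is obtained. The paper rewrites $\mu(\cod v)$ as $\int_0^1 \dgdist_{\{x\le \cod v\}}\,dx$ and performs the substitution $u=(\cod d+1)x$; the Jacobian of this change of variable is exactly where the factor $1/(\cod d+1)=\exp(-\log(\cod d+1))$ arises, and the final bound holds with equality, so the proof makes visible why the rule must pay that particular cost. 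You instead compare uniform densities at $v$ and at $v(1+\dexpr)$: both equal $1$ on $[0,1]$, so the required inequality $1\le 1/(\dexpr+1)$ holds with slack. Both computations discharge Lemma~\ref{lem:pointwise}, and your density reading is actually the one consistent with how the paper treats the Laplace case in the proof of Lemma~\ref{lem:pointwise}; but the slack in your version hides the fact that the cost $-\log(\dexpr+1)$ is not arbitrary headroom -- it is exactly what is consumed by the measure contraction of the dilation when one later sums over sets in Theorem~\ref{thm:soundness}. Two small corrections: the map $v\mapsto v(1+\dexpr)$ is linear, not nonlinear (what you want to say is that it is a dilation rather than a translation as in the Laplace case, so the image of $[0,1]$ shrinks); and the injectivity/well-formedness argument, while correct and worth observing, re-derives what Lemma~\ref{lem:pointwise} already assumes as the hypothesis $\proves Γ$, so it is not strictly part of what this lemma must establish.
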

\begin{proof}
When ∄$v.~m=\subst{m_1}{η}{v}$ or $m(η)<-1$ or $m(η)>1$,
$\evalexpr{η:=\uniform}{m_1}(m)=0$ by the semantics. Hence, desired inequality
is trivial. 

When $m=\subst{m_1}{η}{\cod{v}}$ for some $-1≤\cod{v}≤1$. Let $μ=\uniform$,
$\cod{d}=\evalexpr{\dexpr}{m}$. Notice that by typing rule $\dexpr≤0$. So
$\cod{d}≤0$. We have
\begin{align*}
  &\evalexpr{η:=\uniform}{m_1}(m) \\
=~&μ (\cod{v}) \\
=~&\int_0^1 \dgdist_{\{x≤\cod{v}\}} dx \\
=~&\int_0^1 \dgdist_{\{(\cod{d}+1)x≤\cod{(d+1)v}\}} dx \\
=~&\int_0^{1+\cod{d}} \frac{1}{1+\cod{d}}\dgdist_{\{x≤(\cod{d}+1)\cod{v}\}} dx \\
≤~&\frac{1}{1+\cod{d}} \int_0^1 \dgdist_{\{x≤(\cod{d}+1)\cod{v}\}} dx \\
=~&\exp(-\log (\cod{d}+1)) μ(\cod{v+v\cdot d})
\end{align*}

Since $m_1~Γ~m_2$, we have $\subst{m_1}{η}{\cod{v}}~Γ~\subst{m_2}{η}{(\cod{v+v\cdot d})}$.Hence,
\begin{multline*}
\evalexpr{η:=\uniform}{m_1}(m)≤ \\
\exp(-\log(\cod{d}+1))\evalexpr{η:=\uniform}{m_2}(Γ(m))
\end{multline*}

By the transformation, $\uniform \transform \cod{havoc}_{[0,1]}{η}; \vpriv =
\vpriv-\log (\dexpr+1)$, where $Γ(η)=η\cdot\dexpr$. Hence, $\max
(\restrict{c_1'}{m_0}{m})=\evalexpr{-\log(\dexpr+1)}{\subst{m_1}{η}{\cod{v}}}=-\log(\cod{d}+1)$.
Therefore, 
\begin{multline*}
\evalexpr{η:=\uniform}{m_1}(m)≤\\
\exp(\max (\restrict{c_1'}{m_0}{m}))\evalexpr{η:=\uniform}{m_2}(Γ(m))
\end{multline*}

\end{proof}

\begin{figure}
\begin{mathpar}
\inferrule*[right=(H-Skip)]{ }{\htriple{\post}{\skipcmd}{\post}}
\and
\inferrule*[right=(H-Asgn)]{ }{\htriple{\subst{\post}{x}{e}}{x:=e}{\post}}
\and
\inferrule*[right=(Havoc)]{ }{ \htriple{∀x.~\post}{\havoc(x)}{\post}}
\and
\inferrule*[right=(H-Seq)]{\htriple{\pre}{c_1}{\post'} \quad \htriple{\post'}{c_2}{\post}}{\htriple{\pre}{c_1;c_2}{\post}}
\and
\inferrule*[right=(H-Return)]{ }{\htriple{\post}{\outcmd{e}}{\post}}
\and
\inferrule*[right=(H-If)]{\htriple{e∧\pre}{c_1}{\post} \\ \htriple{\neg e∧\pre}{c_2}{\post}}
{ \htriple{\pre}{\ifcmd{e}{c_1}{c_2}}{\post}}
\and
\inferrule*[right=(H-While)]{\pre⇒\inv \\ \htriple{\inv∧e}{c}{\inv} \\ \inv⇒\post}
                            {\htriple{\pre}{\whilecmd{e}{c}}{\post}}
\end{mathpar}
\caption{Hoare logic rules for the target language.}
\label{fig:hoare}
\end{figure}

\fi
\end{document}